\newcommand{\tu}{\textup}
\newcommand{\ds}{\displaystyle}
\newcommand{\mr}{\mathrm}
\newcommand{\mf}{\mathfrak}
\newcommand{\mc}{\mathcal}
\newcommand{\mb}{\mathbf}
\newcommand{\scr}{\mathscr}
\newcommand{\bihom}{\rightleftarrows}
\newcommand{\eps}{\varepsilon}
\newcommand{\wt}{\widetilde}
\newcommand{\wh}{\widehat}
\newcommand{\BAR}[1]{\overline{#1}}
\newcommand{\defiff}{\stackrel{\text{\tiny{def}}}{\iff}}
\newcommand{\defeq}{\vcentcolon=}
\newcommand{\fieldfont}[1]{\mathbb{#1}}
\newcommand{\N}{\fieldfont{N}}
\newcommand{\FO}{\mr{FO}}
\newcommand{\Los}{{\L}o\'s}
\newtheoremstyle{theorem-style}
  {}
  {}
  {\slshape}
  {}
  {\bf}
  {.}
  {.5em}
  {}
\newtheoremstyle{claim-style}
  {}
  {}
  {\it}
  {}
  {}
  {.}
  {.5em}
  {}
\newtheorem{thm}{Theorem}
\newtheorem{prop}[thm]{Proposition}
\newtheorem{la}[thm]{Lemma}
\newtheorem{cor}[thm]{Corollary}
\theoremstyle{claim-style}
\newtheorem{claim}[thm]{\underline{Claim}}
\theoremstyle{definition}
\newtheorem{df}[thm]{Definition}
\newtheorem{rmk}[thm]{Remark}
\theoremstyle{remark}
\newcommand{\lemp}{\le_{\mr{mp}}}
\newcommand{\SUB}[1]{\mr{SUB}_{#1}}
\newcommand{\MODEL}{\mr{MODEL}}
\newcommand{\MinStruct}{\mr{MinCores}}
\newcommand{\A}{\mc{A}}
\newcommand{\ACzero}{\mr{AC}^0}
\newcommand{\NCone}{\mr{NC}^1}
\newcommand{\tw}{\mb{tw}}
\newcommand{\td}{\mb{td}}
\newcommand{\lp}{\mb{lp}}
\newcommand{\Gaif}{\mr{Gaif}}
\newcommand{\retr}{\stackrel{\scriptscriptstyle\supseteq}{\to}}
\renewcommand{\mf}{\mc}
\begin{document}

\title{An Improved Homomorphism Preservation Theorem\\
From Lower Bounds in Circuit Complexity}

\date{September 18, 2016}
        
\author{Benjamin Rossman\thanks{Supported by NSERC and the JST ERATO Kawarabayashi Large Graph Project. This paper was partially written at the National Institute of Informatics in Tokyo and during a visit to IMPA, the National Institute for Pure and Applied Mathematics in Rio de Janeiro.}\\ University of Toronto}

\maketitle

\begin{abstract}
Previous work of the author \cite{rossman2008homomorphism} showed that the Homomorphism Preservation Theorem of classical model theory remains valid when its statement is restricted to finite structures. In this paper, we give a new proof of this result via a reduction to lower bounds in circuit complexity, specifically on the $\ACzero$ formula size of the colored subgraph isomorphism problem. Formally, we show the following: if a first-order sentence $\Phi$ of quantifier-rank $k$ is preserved under homomorphisms on finite structures, then it is equivalent on finite structures to an existential-positive sentence $\Psi$ of quantifier-rank $k^{O(1)}$. Quantitatively, this improves the result of \cite{rossman2008homomorphism}, where the upper bound on the quantifier-rank of $\Psi$ is a non-elementary function of $k$.
\end{abstract}

\newpage

\tableofcontents

\newpage

\section{Introduction}\label{sec:introduction}

Preservation theorems are a family of results in classical model theory that equate semantic and syntactic properties of first-order formulas. A prominent example\,---\,and the subject of this paper\,---\,is the Homomorphism Preservation Theorem, which states that a first-order sentence is preserved under homomorphisms if, and only if, it is equivalent to an existential-positive sentence. (Definitions for the various terms in this theorem are given in Section \ref{sec:preservation-theorems}.) Two related classical preservation theorems are the \Los-Tarski Theorem (preserved under {\em embedding} homomorphisms $\Leftrightarrow$ equivalent to an {\em existential} sentence) and Lyndon's Theorem (preserved under {\em surjective} homomorphism $\Leftrightarrow$ equivalent to a {\em positive} sentence).

In all classical preservation theorems, the ``syntactic property $\Rightarrow$ semantic property'' direction is straightforward, while the ``semantic property $\Rightarrow$ syntactic property'' direction is typically proved by an application of the Compactness Theorem.\footnote{The Compactness Theorem states that a first-order theory $T$ (i.e.\ set of first-order sentences) is consistent (i.e.\ there exists a structure $\A$ which satisfies every sentence in $T$) if every finite sub-theory of $T$ is consistent. (See \cite{Hodges} for background and proofs of various preservation/amalgamation/interpolation theorems in classical model theory.)} In order to use compactness, it is essential that the semantic property (i.e.\ preservation under a certain relationship between structures) holds with respect to {\em all} structures, that is, both finite and infinite. One may also ask about the status of classical preservation theorems relative to a class of structures $\scr C$. So long as compactness holds in $\scr C$ (for example, whenever $\scr C$ is first-order axiomatizable), so too will all of the classical preservation theorems. The situation is less clear when $\scr C$ is the class of finite structures (or a subclass thereof), as the Compactness Theorem is easily seen to be false when restricted to finite structures.\footnote{Consider the theory $T = \{\Phi_n : n \in \N\}$ where $\Phi_n$ expresses ``there exist $\ge n$ distinct elements''. Every finite sub-theory of $T$ has a finite model, but $T$ itself does not.} 

The program of classifying theorems in classical model theory according to their validity over finite structures was a major line of research, initiated by Gurevich \cite{Gur84}, in the area known as {\em finite model theory} (see \cite{EF,FMTapps,Lib}). The status of preservation theorems in particular was systematically investigated in \cite{AlG97,RW95}. Given the failure of the Compactness Theorem on finite structures, it is not surprising that nearly all of the classical preservation theorems become false when their statements are restricted to finite structures. A counterexample of Tait \cite{Tait} from 1959 showed that the \Los-Tarski Theorem is false over finite structures, while Ajtai and Gurevich \cite{AG87} in 1987 gave the demise of Lyndon's Theorem via a stronger result in circuit complexity. Namely, they showed that $\mr{Monotone} \cap \ACzero \ne \mr{Monotone\text{-}AC^0}$, that is, there is a (semantically) monotone Boolean function that is computable by $\ACzero$ circuits, but not by (syntactically) monotone $\ACzero$ circuits. The failure of Lyndon's theorem on finite structures follows via the {\em descriptive complexity} correspondence between $\ACzero$ and first-order logic. (See \cite{Imm} about the nexus between logics and complexity classes.)

Given the failure of both the \Los-Tarski and Lyndon Theorems, it might be expected that the Homomorphism Preservation Theorem also fails over finite structures (as it seems to live at the intersection of \Los-Tarski and Lyndon). On the contrary, however, previous work of the author \cite{rossman2008homomorphism} showed that the Homomorphism Preservation Theorem remains valid over finite structures. The technique of \cite{rossman2008homomorphism} is model-theoretic: its starting point is a new compactness-free proof of the classical theorem, which is then adapted to finite structures. (A summary of the argument is included in Section~\ref{sec:comparison}.) In the present paper, we give a completely different proof of this result\,---\,and moreover obtain a quantitative improvement\,---\,via a reduction to lower bounds in circuit complexity. In particular, we rely on a recent result (of independent interest) that the $\ACzero$ formula size of the colored $G$-subgraph isomorphism problem is $n^{\Omega(\textbf{tree-depth}(G)^\eps)}$ for an absolute constant $\eps > 0$.

\paragraph{Related Work.}
Prior to \cite{rossman2008homomorphism}, the status of the Homomorphism Preservation Theorem on finite structures was investigated by Feder and Vardi \cite{FV03}, Gr\"aedel and Rosen \cite{GR99}, and Rosen \cite{Rosen95}, who resolved special cases of the question for restricted classes of first-order sentences. Another special case is due to Atserias \cite{atserias2008digraph} in the context of CSP dualities. (See \cite{rossman2008homomorphism} for a discussion of these results.) A different\,---\,and incomparable\,---\,line of results \cite{ADK06,dawar2010homomorphism,nesetril2014first} proves versions of the Homomorphism Preservation Theorems restricted to various {\em sparse} classes of finite structures (see Ch.\ 10 of \cite{nevsetril2012sparsity}, as well as \cite{atserias2008preservation} related to the \Los-Tarski Theorem). See Stolboushkin \cite{Stol95} for an alternative counterexample showing that Lyndon's Theorem fails on finite structures, which is simpler than Ajtai and Gurevich \cite{AG87} (but doesn't extend to show $\mr{Monotone\text{-}AC^0} \ne \mr{Monotone} \cap \mr{AC^0}$).

\paragraph{Outline.}
The rest of the paper is organized as follows. Because our narrative jumps between logic, graph theory and circuit complexity, for readability sake the various preliminaries\,---\,which may be familiar (at least in part) to many readers\,---\,are presented in separate sections as needed throughout the paper. In Section \ref{sec:prelims1}, we review basic definitions related to structures, homomorphisms, and first-order logic. In Section \ref{sec:preservation-theorems}, we formally state the various preservation theorems discussed in the introduction, including our main result (Theorem \ref{thm:improved}). Section \ref{sec:prelims2} includes the necessary background on circuit complexity ($\ACzero$ and monotone projections) and graph theory (tree-width, tree-depth, and minor-monotonicity). In Section \ref{sec:subgraph-isomorphism}, we introduce the {colored $G$-subgraph isomorphism} problem and state the known bounds on its complexity for $\ACzero$ circuits and $\ACzero$ formulas. Section \ref{sec:prelims3} states a needed lemma from descriptive complexity ($\mr{FO} = \ACzero$) and a result connecting quantifier-rank to tree-depth. In Section \ref{sec:improved}, we prove our main result (Theorem \ref{thm:improved}) via a reduction to lower bounds for colored $G$-subgraph isomorphism. (After all the preliminaries, the reduction itself is relatively simple.) For comparison sake, the previous model-theoretic proof technique of \cite{rossman2008homomorphism} is summarized in Section \ref{sec:comparison}. We conclude in Section \ref{sec:conclusion} with a brief discussion of syntax vs.\ semantics in circuit complexity.

\section{Preliminaries, I}\label{sec:prelims1}

\subsection{Structures and Homomorphisms}

Throughout this paper, let $\sigma$ be a fixed finite relational signature, that is, a list of relation symbols $R^{(r)}$ (where $r \in \N$ denotes the arity of $R$). A {\em structure} $\mf A$ consists of a set $A$ (called the unvierse of $\mf A$) together with interpretations $R^{\mf A} \subseteq A^r$ for each relation symbol $R^{(r)}$ in $\sigma$. A priori, structures may be finite or infinite.

A {\em homomorphism} from a structure $\mf A$ to a structure $\mf B$ is a map $f : A \to B$ such that $(a_1,\dots,a_r) \in R^{\mf A} \Longrightarrow (f(a_1),\dots,f(a_r)) \in R^{\mf B}$ for every $R^{(r)} \in \sigma$ and $(a_1,\dots,a_r) \in A^r$. Notation $\mf A \to \mf B$ asserts the existence of a homomorphism from $\mf A$ to $\mf B$. 

A homomorphism $f : \mf A \to \mf B$ is an {\em embedding} if $f$ is one-to-one and satisfies $(a_1,\dots,a_r) \in R^{\mf A} \iff (f(a_1),\dots,f(a_r)) \in R^{\mf B}$ for every $R^{(r)} \in \sigma$ and $(a_1,\dots,a_r) \in A^r$.

\subsection{First-Order Logic}

{\em First-order formulas} (in the relational signature $\sigma$) are constructed out of atomic formulas (of the form $x_1 = x_2$ or $R(x_1,\dots,x_r)$ where $R^{(r)} \in \sigma$ and $x_i$'s are variables) via boolean connectives ($\varphi \wedge \psi$, $\varphi \vee \psi$, and $\neg\varphi$) and universal and existential quantification ($\forall x\ \varphi(x)$ and $\exists x\ \varphi(x)$). For a structure $\mf A$ and a first-order formula $\varphi(x_1,\dots,x_k)$ and a tuple of elements $\vec a \in A^k$, notation $\mf A \models \varphi(\vec a)$ is the statement that $\mf A$ satisfies $\varphi$ with $\vec a$ instantiating the free variables $\vec x$. First-order formulas with no free variables are called {\em sentences} and represented by capital Greek letters $\Phi$ and $\Psi$.

A first-order sentence (or formula) is said to be:
\begin{itemize}
\item
{\em positive} if it does not contain any negations (that is, it has no sub-formula of the form $\neg\varphi$),
\item
{\em existential} if it contains only existential quantifiers (that is, it has no universal quantifiers) and has no negations outside the scope of any quantifier, and
\item
{\em existential-positive} if it is both existential and positive.
\end{itemize}

Two important parameters first-order sentences are quantifier-rank and variable-width. {\em Quantifier-rank} is the maximum nesting depth of quantifiers. {\em Variable-width} is the maximum number of free variables in a sub-formula. As we will see in Section \ref{sec:prelims3}, under the descriptive complexity characterization of first-order logic in terms of $\ACzero$ circuits, variable-width corresponds to $\ACzero$ circuit size and quantifier-rank corresponds to $\ACzero$ formula size (or, more accurately, $\ACzero$ formula depth when fan-in is restricted to $O(n)$).

Note that first-order sentences are not assumed to be in prenex form. For example, the formula $(\exists x\ P(x)) \vee (\exists y\ \neg Q(y))$ is existential (but not positive) and has quantifier-rank~$1$ and variable-width~$1$.

\section{The Homomorphism Preservation Theorem}\label{sec:preservation-theorems}

\begin{df}
A first-order sentence $\Phi$ is {\em preserved under homomorphisms \tu[on finite structures\tu]} if $(\mf A \models \Phi$ and $\mf A \to \mf B) \Longrightarrow \mf B \models \Phi$ for all [finite] structures $\mf A$ and $\mf B$. The notions of {\em preserved under embeddings} and {\em preserved under surjective homomorphisms} are defined similarly.
\end{df}

We now formally state the three classical preservations mentioned in the introduction.

\begin{thm}[\Los-Tarski\,/\,Lyndon\,/\,Homomorphism Preservation Theorems \cite{Ly59}]\label{thm:preservation}
\item
A first-order sentence is preserved under \tu[embedding\,\tu/\,surjective\,\tu/\,all\tu] homomorphisms if, and only if, it is equivalent to an \tu[existential\,\tu/\,positive\,\tu/\,existential-positive\tu] sentence.
\end{thm}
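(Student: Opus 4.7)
The plan is to prove all three biconditionals via the Compactness Theorem, focusing on the Homomorphism Preservation Theorem and indicating variations for the other two. The ``syntactic $\Rightarrow$ semantic'' direction in each case is a routine structural induction: atomic formulas are preserved under homomorphisms, and the property propagates through $\wedge$, $\vee$, and $\exists$; analogous inductions handle embeddings (allowing negated atomic formulas, since embeddings preserve and reflect atomic facts) and surjective homomorphisms (allowing all quantifiers over positive formulas).

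For the ``semantic $\Rightarrow$ syntactic'' direction of the Homomorphism Preservation Theorem, suppose $\Phi$ is preserved under homomorphisms and set
\[
  T \defeq \{\Psi : \Psi \text{ is existential-positive and } \Phi \models \Psi\}.
\]
If $T \models \Phi$, then by Compactness some finite $T_0 \subseteq T$ satisfies $\bigwedge T_0 \models \Phi$; combined with the trivial $\Phi \models \bigwedge T_0$, the existential-positive sentence $\bigwedge T_0$ is equivalent to $\Phi$. So the real task is to establish $T \models \Phi$.

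Fix $\mf B \models T$; the goal is to produce $\mf A$ with $\mf A \models \Phi$ and a homomorphism $\mf A \to \mf B$, from which preservation yields $\mf B \models \Phi$. To get $\mf A$, consider
\[
  T^{\ast} \defeq \{\Phi\} \cup \{\neg\Psi : \Psi \text{ is existential-positive and } \mf B \not\models \Psi\}.
\]
Compactness shows $T^{\ast}$ is consistent: an inconsistent finite fragment would yield $\Phi \models \Psi_1 \vee \cdots \vee \Psi_n$ for existential-positive $\Psi_i$ with $\mf B \not\models \Psi_i$, but $\Psi_1 \vee \cdots \vee \Psi_n$ is itself existential-positive and lies in $T$, forcing $\mf B \models \Psi_i$ for some $i$, a contradiction. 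Any $\mf A \models T^{\ast}$ then satisfies $\Phi$ and makes every existential-positive sentence true in $\mf A$ also true in $\mf B$. A second compactness argument builds a homomorphism from $\mf A$ into an elementary extension $\mf B^{\ast} \succ \mf B$: expand the signature with constants $c_a$ for each $a \in A$ and consider the elementary diagram of $\mf B$ together with $\{\psi(c_{a_1},\ldots,c_{a_r}) : \psi \text{ atomic and } \mf A \models \psi(a_1,\ldots,a_r)\}$. Every finite fragment is consistent, since the existential closure of the corresponding conjunction of atomic formulas is existential-positive and true in $\mf A$, hence true in $\mf B$. Thus $\mf A \to \mf B^{\ast}$, and preservation together with $\mf B \prec \mf B^{\ast}$ gives $\mf B \models \Phi$.

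The \Los-Tarski and Lyndon theorems follow the same recipe, with the class of witnessing $\Psi$'s changed to existential (resp.\ positive), the class of morphisms changed to embeddings (resp.\ surjective homomorphisms), and the positive atomic diagram of $\mf A$ in the second compactness step replaced by the full atomic diagram (resp.\ by a quotient construction identifying elements along a surjective homomorphism). The main obstacle is the bookkeeping in the second compactness step: one must verify that the finite witness sentences extracted along the way remain inside the prescribed syntactic fragment, so that they genuinely belong to the set $T$ attached to each theorem.
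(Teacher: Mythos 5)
The paper does not actually prove Theorem \ref{thm:preservation}: it is cited as classical background (\cite{Ly59}, with \cite{Hodges} for proofs), and the introduction merely remarks that the hard direction ``is typically proved by an application of the Compactness Theorem.'' Your argument is precisely that standard compactness proof, and for the Homomorphism Preservation Theorem it is correct: the consistency of $T^{\ast}$ via the closure of existential-positive sentences under finite disjunction, the transfer of existential-positive truth from $\mf A$ to $\mf B$, and the diagram argument producing a homomorphism $\mf A \to \mf B^{\ast}$ into an elementary extension are all sound, and the \Los-Tarski variant with the full atomic diagram goes through the same way. The only route the paper itself endorses for the classical theorem is different in spirit: the compactness-free construction $\Gamma(\cdot)$ of \cite{rossman2008homomorphism}, summarized in Section \ref{sec:comparison}, which yields the stronger equi-rank statement (Theorem \ref{thm:equirank}) and is the version whose finitary approximations survive restriction to finite structures. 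Your proof buys brevity and familiarity; the paper's preferred route buys quantitative control and adaptability to the finite setting, which is the whole point of the rest of the paper.

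Two caveats. First, the Lyndon case is genuinely harder than ``bookkeeping'': a single compactness/diagram step can produce \emph{a} homomorphism or an embedding into one elementary extension, but it cannot produce a \emph{surjective} homomorphism. The standard proofs either go through Lyndon's interpolation theorem or build an elementary chain, alternating extensions of (expansions of) $\mf A$ and $\mf B$ and taking a union so that the map becomes onto in the limit; your parenthetical ``quotient construction'' does not describe a working argument, so this case should be either carried out properly or explicitly delegated to the literature. Second, a degenerate case worth a sentence: if $\Phi$ is unsatisfiable, the finite subset $T_0$ may be empty and one needs the convention that $\bot$ (or an unsatisfiable existential-positive sentence) is admitted into the target fragment; similarly the empty structure requires a convention if $\exists x\,(x=x)$ is to serve as the existential-positive ``true.''
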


As discussed in the introduction, \Los-Tarski and Lyndon's Theorems become false when restricted to finite structures.

\begin{thm}[Failure of \Los-Tarski and Lyndon Theorems on Finite Structures \cite{AG87,Tait}]\label{thm:failure}
\item
There exists a first-order sentence that is preserved under \tu[embedding\,\tu/\,surjective\tu] homomorphisms on finite structures, but is not equivalent on finite structures to any \tu[existential\,\tu/\,positive\tu] sentence.
\end{thm}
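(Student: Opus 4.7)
The plan is to establish the two parts of the theorem independently, since the \Los-Tarski and Lyndon cases require quite different constructions. For the \Los-Tarski half, following Tait, I would exhibit a concrete counterexample sentence $\Phi$ in a small signature (say, one binary relation and perhaps one unary predicate) with the property that $\Phi$ has the form ``the ambient structure fails some rigid combinatorial configuration, \emph{or} a specific local witness is present.'' The rigid configuration (for instance, being a discrete linear order with two endpoints, or being the transitive closure of a functional digraph) is chosen so that any proper finite extension either still satisfies the witness clause or else destroys the rigid configuration, making the disjunction vacuously true. This would verify preservation under embeddings on finite structures. For non-equivalence to any existential sentence on finite structures, I would use an \EhFr-style pebble argument in its asymmetric (existential) form: given any candidate existential sentence $\Psi$ of quantifier-rank $k$, construct a pair of finite structures $\mf A_k,\mf B_k$ where $\mf A_k \models \Phi$, $\mf B_k \not\models \Phi$, yet Duplicator wins the $k$-round existential game from $\mf A_k$ into $\mf B_k$, forcing $\mf A_k \equiv_\Psi \mf B_k$.

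For the Lyndon half, I would follow Ajtai-Gurevich and reduce the question to a circuit-complexity separation. The starting point is the descriptive-complexity correspondence identifying first-order logic (on ordered finite structures) with uniform $\ACzero$, under which \emph{positive} first-order sentences correspond to \emph{monotone} $\ACzero$ circuits and \emph{semantic} preservation under surjective homomorphisms corresponds to \emph{semantic} monotonicity of the Boolean function computed. Thus it suffices to produce a Boolean function $f$ lying in $\mr{Monotone}\cap\ACzero$ but not in $\mr{Monotone}\text{-}\ACzero$, and then to re-express $f$ as a first-order sentence $\Phi$ on ordered finite structures; the semantic monotonicity of $f$ becomes preservation of $\Phi$ under surjective homomorphisms, while the absence of a monotone $\ACzero$ circuit for $f$ becomes the inexpressibility of $\Phi$ by any positive sentence. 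The translation back and forth between monotone circuits and positive formulas is routine once the signature is set up to encode the input bits as relational tuples.

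The main obstacle in both halves is purely combinatorial. In the \Los-Tarski half, the challenge is to design the family $\{(\mf A_k,\mf B_k)\}$ so that Duplicator's winning strategy is transparent and yet the sentence $\Phi$ genuinely separates them; the rigid-configuration trick is what makes Duplicator's strategy survive asymmetric play. In the Lyndon half, the bulk of the work is the monotone $\ACzero$ lower bound on $f$: I would attempt this via a random-restriction / switching-lemma argument specialized to monotone circuits, combined with an approximation method in which monotone gates at each level are replaced by small monotone DNFs (or sunflower covers) with controlled error under an input distribution concentrated on the threshold where $f$ changes value. The error analysis propagating through $O(1)$ layers, together with a matching hardness-of-approximation argument for the specific $f$ chosen, is the technically demanding step; everything else (the $\ACzero$ upper bound on $f$, the semantic monotonicity check, and the translation into first-order syntax) is then comparatively routine.
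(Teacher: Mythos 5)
First, a point of context: the paper does not actually prove Theorem~\ref{thm:failure}. It is stated as known background, attributed to \cite{Tait} and \cite{AG87}, with only a one-paragraph description of the Ajtai--Gurevich route in the introduction. So the only available comparison is between your plan and the cited literature, and at that level your outline does point at the right sources: Tait's rigid-configuration counterexample plus an asymmetric \EhFr\ game for the \Los-Tarski half, and the separation $\mr{Monotone} \cap \ACzero \ne \mr{Monotone\text{-}AC^0}$ combined with descriptive complexity for the Lyndon half.

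As a proof, however, the proposal has genuine gaps, because in both halves the entire mathematical content is deferred. For the \Los-Tarski half you never exhibit the sentence $\Phi$ or the structures $\mf A_k, \mf B_k$; the ``rigid configuration'' device is precisely where the difficulty lives, since one must simultaneously ensure that every embedding of a finite model into a finite structure lands in a model (which fails for naive choices of the rigid configuration) and that Duplicator survives the existential game, and these two requirements pull against each other. For the Lyndon half, the monotone $\ACzero$ lower bound you sketch \emph{is} the theorem of Ajtai and Gurevich, not a routine corollary of switching lemmas; no off-the-shelf random-restriction argument is known to yield it. Moreover, two steps you describe as routine are not correct as stated. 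The correspondence $\FO = \ACzero$ requires a built-in linear order (and, to land on an honest first-order sentence rather than a non-uniform family, a uniformity condition), and a built-in order interacts delicately with homomorphisms. More importantly, ``preserved under surjective homomorphisms'' is strictly \emph{stronger} than semantic monotonicity of $\MODEL_\Phi$: a surjective homomorphism may identify elements, so preservation demands closure under quotients in addition to closure under adding tuples over a fixed universe. Quotient-closure must be verified separately for the particular sentence produced, and this is one of the places where the Ajtai--Gurevich construction does real work. As written, your proposal is a correct roadmap to the literature but not a proof of either half.
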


In contrast, the Homomorphism Preservation Theorem remains valid over finite structures. 

\begin{thm}[Homomorphism Preservation Theorem on Finite Structures \cite{rossman2008homomorphism}]\label{thm:finiteHPT}
\item
If a first-order sentence of quantifier-rank $k$ is preserved under homomorphisms on finite structures, then it is equivalent on finite structures to an existential-positive sentence of quantifier-rank $\beta(k)$, for some computable function $\beta : \N \to \N$.
\end{thm}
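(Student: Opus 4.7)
The plan is to take a first-order sentence $\Phi$ of quantifier-rank $k$ preserved under homomorphisms on finite structures and produce the explicit candidate
\[
\Psi \;\defeq\; \bigvee_{\mf M \,\in\, \mathcal{M}(\Phi)}\ \exists\,\vec{x}\ \tu{diag}(\mf M)(\vec{x}),
\]
where $\mathcal{M}(\Phi)$ enumerates the finite structures that are minimal under the homomorphism preorder among models of $\Phi$, and $\tu{diag}(\mf M)(\vec{x})$ is the conjunction of atomic facts of $\mf M$. Because the finite models of $\Phi$ form an up-closed class in the homomorphism preorder, every finite $\mf B \models \Phi$ is the homomorphic image of some $\mf M \in \mathcal{M}(\Phi)$, so $\Phi \equiv \Psi$ on finite structures holds essentially by definition. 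The entire problem is therefore to control a suitable complexity parameter of each $\mf M \in \mathcal{M}(\Phi)$ in terms of $k$.

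The parameter I would target is tree-depth, using the lemma promised in Section~\ref{sec:prelims3}: a finite structure $\mf M$ of tree-depth $t$ has the property ``$\mf M \to \cdot$'' expressible by an existential-positive sentence of quantifier-rank $O(t)$, obtained by quantifying vertices along a tree-depth decomposition so that only one variable per level is live. Granting this, it suffices to show $\td(\mf M) \le k^{O(1)}$ for every $\mf M \in \mathcal{M}(\Phi)$; applying the lemma disjunct-by-disjunct and using that the quantifier-rank of a disjunction is the maximum over its disjuncts delivers a $\Psi$ of quantifier-rank $k^{O(1)}$, which is strictly stronger than what Theorem~\ref{thm:finiteHPT} demands.

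To bound the tree-depth of minimal cores, I would argue by contradiction via the colored subgraph-isomorphism lower bound from Section~\ref{sec:subgraph-isomorphism}. Fix $\mf M \in \mathcal{M}(\Phi)$ and consider an $n$-indexed family of ``colored'' input structures that realize the colored $\mf M$-subgraph isomorphism problem: the color classes force any homomorphism from $\mf M$ to respect a prescribed vertex-to-color-class assignment, so, by the minimality of $\mf M$ inside the up-closed set of models of $\Phi$, the input satisfies $\Phi$ \emph{iff} it contains a colored homomorphic copy of $\mf M$. Invoking the descriptive-complexity lemma of Section~\ref{sec:prelims3}, the sentence $\Phi$ of quantifier-rank $k$ translates into an $\ACzero$ formula of size at most $n^{O(k)}$ on this family of inputs, so the same bound applies to colored $\mf M$-subgraph isomorphism. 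Matching this upper bound against the lower bound of $n^{\Omega(\td(\mf M)^{\eps})}$ forces $\td(\mf M)^{\eps} \le O(k)$ and hence $\td(\mf M) \le k^{O(1/\eps)} = k^{O(1)}$.

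The main obstacle is the construction of the colored input family in the last step. One must engineer the coloring gadgets so that $\Phi$, on these particular inputs, really does decide colored $\mf M$-subgraph isomorphism rather than accidentally detecting some smaller core $\mf M'$ with $\mf M' \to \mf M$ (against which minimality of $\mf M$ within $\mathcal{M}(\Phi)$ would no longer be useful). A related subtlety is arithmetic: the exact form of the $\FO$-to-$\ACzero$ translation used in Section~\ref{sec:prelims3} must yield formula size $n^{\tu{poly}(k)}$, not merely $n^{\exp(k)}$, for the lower bound $n^{\Omega(\td(\mf M)^{\eps})}$ to pin $\td(\mf M)$ down to $k^{O(1)}$ rather than a non-elementary function of $k$. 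Once these ingredients are aligned, the conclusion is just the bookkeeping already outlined.
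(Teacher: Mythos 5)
Your proposal follows essentially the same route as the paper: reduce to bounding $\td(\Gaif(\mf M))$ for minimal cores via Lemma~\ref{la:MinStruct}(\ref{la:qr-td}), build a monotone-projection reduction $\SUB{\Gaif(\mf M),n} \lemp \MODEL_{\Phi,mn}$ whose correctness rests on hom-preservation in one direction and on minimality (every homomorphism between minimal cores being an isomorphism) in the other, and then play the $n^{O(k)}$ upper bound from Lemma~\ref{la:FO-AC0} against the $n^{\Omega(\td(G)^{\eps})}$ lower bound of Theorem~\ref{thm:td-bound}. The two obstacles you flag (ruling out accidental detection of a different core, and needing formula size $n^{\mathrm{poly}(k)}$ rather than $n^{\exp(k)}$) are exactly the points the paper resolves via Lemma~\ref{la:MinStruct}(\ref{la:iso}) and the depth-$k$, size-$O(n^k)$ translation, so the plan is sound.
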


We will refer to $\beta : \N \to \N$ in Theorem \ref{thm:finiteHPT} as the ``quantifier-rank blow-up''. (Formally, there is one computable function $\beta_\sigma : \N \to \N$ for each finite relational signature $\sigma$.) We remark that the upper bound on $\beta(k)$ given by the proof of Theorem \ref{thm:finiteHPT} is a non-elementary function of $k$ (i.e.\ it is grows faster than any bounded-height tower of exponentials). In contrast, a second result in \cite{rossman2008homomorphism} shows that the optimal bound $\beta(k)=k$ holds in the classical Homomorphism Preservation Theorem.

\begin{thm}[``Equi-rank'' Homomorphism Preservation Theorem \cite{rossman2008homomorphism}]\label{thm:equirank}
\item
If a first-order sentence of quantifier-rank $k$ is preserved under homomorphism, then it is equivalent to an existential-positive sentence of quantifier-rank $k$.
\end{thm}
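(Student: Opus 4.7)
The plan is to sharpen the game-theoretic version of the classical Homomorphism Preservation Theorem, replacing compactness with a bounded-rank \EhFr\ argument that tracks quantifier-rank throughout. First I would introduce the $k$-round \emph{one-sided existential-positive game} $G^{\exists^+}_k(\mf A,\mf B)$: Spoiler picks $a_1,\dots,a_k \in A$, Duplicator responds with $b_1,\dots,b_k \in B$, and Duplicator wins iff $a_i \mapsto b_i$ is a partial homomorphism (preserves positive atomic formulas). A standard back-and-forth argument yields that Duplicator has a winning strategy---write $\mf A \Rrightarrow_k \mf B$---iff every qr-$k$ existential-positive sentence satisfied by $\mf A$ is satisfied by $\mf B$. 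Since $\sigma$ is finite, there are only finitely many such sentences up to logical equivalence, hence only finitely many $\Rrightarrow_k$-equivalence classes; in particular, for each $\mf A$ there is a canonical qr-$k$ existential-positive sentence $\chi^k_{\mf A}$ whose models are exactly $\{\mf B : \mf A \Rrightarrow_k \mf B\}$.

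With this setup, the theorem reduces to the \emph{key lemma}: if $\Phi$ has quantifier-rank $k$ and is preserved under homomorphisms, then $\mf A \Rrightarrow_k \mf B$ and $\mf A \models \Phi$ imply $\mf B \models \Phi$. Granted the lemma, $\Phi$ is equivalent to $\bigvee_{\mf A \models \Phi} \chi^k_{\mf A}$, and by the finiteness above this collapses to a finite disjunction, hence to a single qr-$k$ existential-positive sentence. To prove the lemma, I would construct an auxiliary structure $\mf C$ with (i) a homomorphism $\mf A \to \mf C$ and (ii) $\mf C \equiv_k \mf B$; then (i) combined with preservation gives $\mf C \models \Phi$, and (ii) combined with the quantifier-rank bound on $\Phi$ gives $\mf B \models \Phi$.

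For the construction of $\mf C$, I would fix a winning Duplicator strategy $\tau$ for $G^{\exists^+}_k(\mf A,\mf B)$, pass to a sufficiently saturated elementary extension $\mf B^* \succeq \mf B$ (so $\mf B^* \equiv \mf B$, and in particular $\mf B^* \equiv_k \mf B$), and define $\mf C$ as an amalgam whose universe is $A \sqcup B^*$, with the relational structure inherited from $\mf A$ and $\mf B^*$ on the respective parts together with carefully chosen ``bridging'' atomic facts between $A$ and $B^*$ dictated by $\tau$. The inclusion $A \hookrightarrow C$ is then a homomorphism by design. Verifying $\mf C \equiv_k \mf B^*$ would be done by producing a winning Duplicator strategy in the symmetric $k$-round \EhFr\ game on $(\mf C, \mf B^*)$: Spoiler's moves on the $\mf B^*$-part of $\mf C$ are matched by identity, while moves on the $A$-part are matched using $\tau$, with saturation of $\mf B^*$ supplying the necessary partners at each step.

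The main obstacle is this last verification: one must arrange the bridging atomic facts so that no qr-$k$ first-order formula distinguishes $\mf C$ from $\mf B^*$. This is a bisimulation-style invariant that requires delicate inductive matching between positions of the one-sided $\exists^+$-game (which controls only positive atoms and only in one direction) and positions of the symmetric first-order game (which controls all atoms in both directions). The quantifier-rank bound on $\Phi$ enters precisely here: the matching only needs to succeed up to depth $k$, which is exactly the depth at which $\tau$ is defined and is guaranteed to win, and this is what keeps the resulting existential-positive sentence within quantifier-rank~$k$.
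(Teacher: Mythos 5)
Your first two steps are fine and match the standard reduction: finitely many existential-positive sentences of quantifier-rank $k$ up to equivalence, hence it suffices to prove the key lemma that $\mf A \models \Phi$ and $\mf A \Rrightarrow_k \mf B$ imply $\mf B \models \Phi$. The gap is in your method for the lemma. You propose a \emph{one-sided} sandwich: a single $\mf C$ with $\mf A \to \mf C$ and $\mf C \equiv_{\FO(k)} \mf B^*$ (hence $\mf C \equiv_{\FO(k)} \mf B$, since $\mf B^* \equiv \mf B$), so that preservation under homomorphisms is invoked exactly once. No such $\mf C$ exists in general. Take $\sigma = \{E^{(2)}\}$, $k = 100$, $\mf B = C_4$ the $4$-cycle, and $\mf A$ an odd cycle of length greater than $2^{100}$. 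Then $\mf A \Rrightarrow_{100} \mf B$: a primitive-positive sentence of quantifier-rank at most $100$ has a canonical structure of tree-depth at most $100$ (as in the sketch of Lemma~\ref{la:MinStruct}(\ref{la:qr-td})), hence by inequality (\ref{eq:lptd}) it contains no cycle of length $2^{100}$ or more; if it maps homomorphically into the long odd cycle it must therefore be loopless and bipartite (a non-bipartite graph contains an odd cycle, and a short odd cycle does not map into a long one), so it maps into $C_4$. Yet any $\mf C$ with $\mf C \equiv_{\FO(100)} C_4$ is isomorphic to $C_4$ (quantifier-rank $100$ already pins down a four-element structure up to isomorphism, and this also neutralizes the saturated extension: $\mf B^* \equiv C_4$ forces $\mf B^* \cong C_4$), whereas $\mf A \to \mf C$ would make $\mf C$ non-$2$-colorable. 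So the object you intend to build cannot exist; your specific amalgam on $A \sqcup B^*$ fails for the even cruder reason that its cardinality exceeds $|B|$, which is detected at quantifier-rank $|B|+1$. The difficulty you flag about bridging atomic facts is also real (the one-sided $\exists^+$ strategy $\tau$ says nothing about negated atoms or about Spoiler moves in $\mf B^*$, so two non-adjacent elements of $A$ may be sent by $\tau$ to adjacent elements of $B^*$, losing the partial-isomorphism condition in round two), but the counterexample shows the problem is not delicacy: the target invariant is unachievable.

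The proof summarized in Section~\ref{sec:comparison} (from \cite{rossman2008homomorphism}) avoids this by using a \emph{two-sided} sandwich $\mf A \to \mf C \equiv_{\FO(k)} \mf D$ together with a retraction $\mf D \to \mf B$, so that preservation is applied twice: once to push $\Phi$ from $\mf A$ into $\mf C$, and once to pull it from $\mf D$ back down to $\mf B$. Concretely, $\Gamma(\mf E)$ is the Fra\"{\i}ss\'e limit of the co-finite co-retracts of $\mf E$ (an infinite structure with $\Gamma(\mf E) \retr \mf E$), and Theorem~\ref{thm:Gamma} upgrades $\equiv_{\exists^+\FO(k)}$ to $\equiv_{\FO(k)}$ after applying $\Gamma$; one takes for instance $\mf C = \Gamma(\mf A \oplus \mf B)$ and $\mf D = \Gamma(\mf B)$, after checking that $\mf A \Rrightarrow_k \mf B$ gives $\mf A \oplus \mf B \equiv_{\exists^+\FO(k)} \mf B$ (pp sentences decompose into connected components of no larger tree-depth). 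In the counterexample, $\Gamma(C_4)$ is an infinite bipartite structure that is $\FO(100)$-indistinguishable from the non-bipartite $\Gamma(\mf A \oplus C_4)$; it is the second application of preservation, along the retraction $\Gamma(C_4) \to C_4$, that returns to $\mf B$ itself. Your plan has no counterpart to that second step, and without it the lemma cannot be reached by this route.
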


Due to reliance on the Compactness Theorem, the original proof of the classical Homomorphism Preservation Theorem gives no computable upper bound whatsoever on the quantifier-rank blow-up. Theorem \ref{thm:equirank} is proved by a constructive, compactness-free argument (see Section \ref{sec:comparison}). In \cite{rossman2008homomorphism} I conjectured that this stronger ``equi-rank'' theorem is valid over finite structures. However, new techniques were clearly needed to improve the non-elementary upper bound on $\beta(k)$.

The main result of the present paper is a completely new proof of Theorem \ref{thm:finiteHPT}, which moreover gives a polynomial upper bound on $\beta(k)$.

\begin{thm}[``Poly-rank'' Homomorphism Preservation Theorem on Finite Structures]\label{thm:improved}
\item
If a first-order sentence of quantifier-rank $k$ is preserved under homomorphisms on finite structures, then it is equivalent on finite structures to an existential-positive sentence of quantifier-rank $k^{O(1)}$.
\end{thm}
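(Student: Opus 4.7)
The plan is to show that $\Phi$ is equivalent on finite structures to
\[
\Psi_q \;\defeq\; \bigvee\bigl\{\psi : \psi \text{ is existential-positive of quantifier-rank} \le q, \ \psi \models \Phi \text{ on finite structures}\bigr\},
\]
for $q = k^{O(1)}$. Since there are only finitely many inequivalent first-order sentences of any fixed quantifier-rank over $\sigma$, this is a finite disjunction, hence itself an existential-positive sentence of quantifier-rank $\le q$. Using the quantifier-rank/tree-depth correspondence recalled in Section~\ref{sec:prelims3}, each disjunct of $\Psi_q$ may be taken to be the canonical conjunctive query $\exists \vec x_{\mf C}\,\phi_{\mf C}$ of a finite structure $\mf C$ of tree-depth $\le q$ with $\mf C \models \Phi$, so that $\mf B \models \Psi_q$ iff some such $\mf C$ maps homomorphically into $\mf B$. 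The direction $\Psi_q \Rightarrow \Phi$ is immediate; the work is to prove the converse on finite structures for a suitable polynomial choice of $q$.

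I would argue by contradiction: assume some finite $\mf B \models \Phi$ satisfies $\mf B \not\models \Psi_q$. Then no finite $\mf C$ of tree-depth $\le q$ with $\mf C \models \Phi$ maps into $\mf B$. Let $\mf A$ be any finite structure with $\mf A \models \Phi$ and $\mf A \to \mf B$ of smallest possible tree-depth; by hypothesis, $\td(\mf A) > q$. The next step is to reduce the colored $\mf A$-subgraph isomorphism problem to deciding $\Phi$: given $(\mf G,\chi)$ with $\chi : V(\mf G) \to V(\mf A)$, form the color-respecting substructure $\mf B_{\mf G,\chi}$ on universe $V(\mf G)$ whose relations contain exactly those tuples of $\mf G$ whose $\chi$-color pattern matches a tuple of the corresponding relation of $\mf A$ (padding, if necessary, by a disjoint fixed structure chosen so as not to cause $\Phi$ to hold on its own). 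By homomorphism preservation, a colored embedding of $\mf A$ in $(\mf G,\chi)$ yields $\mf A \to \mf B_{\mf G,\chi}$ and hence $\mf B_{\mf G,\chi} \models \Phi$; the converse is intended to follow because the coloring built into $\mf B_{\mf G,\chi}$, together with the minimality of $\mf A$, is supposed to preclude any homomorphism into $\mf B_{\mf G,\chi}$ from a structure of strictly smaller tree-depth satisfying $\Phi$. Observe that this construction is a monotone projection in the input edges of $\mf G$.

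By the $\FO = \ACzero$ correspondence recalled in Section~\ref{sec:prelims3}, $\Phi$ is computed on $n$-element structures by an $\ACzero$ formula of size $n^{O(k)}$, and composing with the monotone projection above produces an $\ACzero$ formula of comparable size that solves colored $\mf A$-subgraph isomorphism on $n$-vertex inputs. The lower bound from Section~\ref{sec:subgraph-isomorphism} then forces $n^{O(k)} \ge n^{\Omega(\td(\mf A)^\eps)}$, whence $\td(\mf A) \le k^{O(1)}$; choosing $q$ to exceed this polynomial contradicts $\td(\mf A) > q$. The expected main obstacle is securing the ``only if'' direction of the reduction, namely that $\mf B_{\mf G,\chi} \models \Phi$ really does force a colored copy of $\mf A$ in $(\mf G,\chi)$; this requires exploiting the minimality of $\mf A$ and likely a careful design of the padding so that the unique way to witness $\Phi$ in $\mf B_{\mf G,\chi}$ is via $\mf A$ on its color-respecting part. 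Once this is in place, the quantitative chain ``quantifier-rank $k$ $\Rightarrow$ $\ACzero$ formula size $n^{O(k)}$ $\Rightarrow$ tree-depth $k^{O(1)}$ $\Rightarrow$ quantifier-rank $k^{O(1)}$'' delivers the polynomial blow-up $\beta(k) = k^{O(1)}$ asserted by the theorem.
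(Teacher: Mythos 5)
Your overall architecture is the right one and in fact matches the paper's: reduce the theorem to bounding the tree-depth of certain minimal witnesses, exhibit a monotone-projection reduction from colored subgraph isomorphism on the Gaifman graph of such a witness to $\MODEL_\Phi$, and then play the $n^{O(k)}$ upper bound from $\FO\subseteq\ACzero$ (Lemma \ref{la:FO-AC0}) against the $n^{\Omega(\td(\cdot)^\eps)}$ formula lower bound (Theorem \ref{thm:td-bound}). But the step you flag as ``the expected main obstacle'' is a genuine gap, and the specific minimality you chose cannot close it. The correctness of the reduction requires an \emph{iff}: $\mf B_{\mf G,\chi}\models\Phi$ exactly when $(\mf G,\chi)$ contains a color-respecting copy of $\mf A$. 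Minimality of $\td(\mf A)$ among models of $\Phi$ mapping into $\mf B$ only tells you that any color-respecting substructure satisfying $\Phi$ has tree-depth at least $\td(\mf A)$; it does not produce a homomorphism from $\mf A$ back into $\mf B_{\mf G,\chi}$, let alone a transversal one. Concretely, if some $\mf N\models\Phi$ maps into $\mf A$ with image a proper substructure $\mf N'\subsetneq\mf A$ (which can happen with $\td(\mf N')=\td(\mf A)$ and $\mf N'\to\mf B$, so your minimality does not exclude it), then the instance consisting of a single transversal copy of $\mf N'$ alone yields $\mf B_{\mf G,\chi}\models\Phi$ with no colored copy of $\mf A$, and the projection fails to compute colored $\mf A$-subgraph isomorphism. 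No choice of padding repairs this, since the failure occurs inside the color-respecting part; without the iff, the lower bound cannot be invoked at all.

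The paper closes exactly this gap by choosing the witness minimal in the \emph{homomorphism order} rather than in tree-depth: it works with $\mf M\in\MinStruct(\scr C)$, where $\scr C$ is the class of finite models of $\Phi$. By definition every $\mf A'\in\scr C$ with $\mf A'\to\mf M$ contains $\mf M$ as a retract, and by Lemma \ref{la:MinStruct}(\ref{la:iso}) every homomorphism between members of $\MinStruct(\scr C)$ is an isomorphism. In the backward direction of Claim \ref{claim:proj} one takes $\mf N\in\MinStruct(\scr C)$ with $\gamma:\mf N\to\rho^\ast(X)$, composes with the fiber projection $\pi:\rho^\ast(X)\to\mf M$, concludes that $\pi\circ\gamma$ is an isomorphism, and reads off a transversal section $\alpha$, i.e.\ a colored copy of $\Gaif(\mf M)$. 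Lemma \ref{la:MinStruct}(\ref{la:qr-td}) then converts the resulting bound $\td(\Gaif(\mf M))\le k^{O(1)}$, holding for all $\mf M\in\MinStruct(\scr C)$, directly into the desired existential-positive sentence, so no argument by contradiction over individual models $\mf B$ is needed. If you replace your tree-depth-minimal $\mf A$ by a member of $\MinStruct(\scr C)$ and drop the padding (the paper instead identifies the universe with $V(G)\times[n]$ and hard-wires constants in the projection to handle loops and non-edges), the rest of your quantitative chain goes through essentially as written.
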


The proof of Theorem \ref{thm:improved} involves a reduction to the $\ACzero$ formula size of $\SUB{G}$, the {colored $G$-subgraph isomorphism} problem. This reduction transforms lower bounds on the $\ACzero$ formula size of $\SUB{G}$ into upper bounds on the quantifier-rank blow-up $\beta(k)$ in Theorem \ref{thm:finiteHPT}. In Section \ref{sec:first-bound}, we derive an exponential upper bound $\beta(k) \le 2^{O(k)}$ from an existing lower bound of  \cite{rossman2014formulas} on the $\ACzero$ formula size of $\SUB{P_k}$ (also known as the {distance-$k$ connectivity} problem). Two further steps, described in Section \ref{sec:extension}, are required for the polynomial upper bound $\beta(k) \le k^{O(1)}$ of Theorem \ref{thm:improved}. The first is a new result in graph minor theory from \cite{RossmanTREEDEPTH} (joint work with Ken-ichi Kawarabayashi), which gives a ``polynomial excluded-minor approximation'' of tree-depth, analogous to the Polynomial Grid-Minor Theorem of Chekuri and Chuzhoy \cite{chekuri2014polynomial}. The second ingredient, in a forthcoming paper of the author \cite{RossmanSUBGRAPH}, is a lower bound on $\ACzero$ formula size of $\SUB{G}$ in the special case where $G$ is complete binary tree.

\section{Preliminaries, II}\label{sec:prelims2}

\subsection{Circuit Complexity}

We consider {\em Boolean circuits} with unbounded fan-in $\mr{AND}$ and $\mr{OR}$ gates and negations on inputs. That is, inputs are labelled by variables $x_i$ or negated variables $\BAR x_i$ (where $i$ comes from some finite index set, typically $\{1,\dots,n\}$). We measure {\em size} by the number of gates and {\em depth} by the maximum number of gates on an input-to-output path. Boolean circuits with fan-out $1$ (i.e.\ tree-like Boolean circuits) are called {\em Boolean formulas}. (Boolean formulas are precisely the same as quantifier-free first-order formulas.)

The {\em depth-$d$ circuit/formula size} of a Boolean function $f$ is the minimum size of a depth-$d$ circuit/formula that computes $f$. {\em $\ACzero$} refers to constant-depth, $\mr{poly}(n)$-size sequences of Boolean circuits/formula on $\mr{poly}(n)$ variables. For a sequence $(f_n)$ of Boolean functions on $\mr{poly}(n)$ variables and a constant $c > 0$, we say that ``$(f_n)$ has $\ACzero$ circuit/formula size $O(n^c)$ (resp.\ $\Omega(n^c)$)'' if for some $d$ (resp.\ for all $d$), the depth-$d$ circuit/formula size of $f_n$ is $O_d(n^c)$ (resp.\ $\Omega_d(n^c)$) for all $n$.

One slightly unusual complexity measure (which arises in the descriptive complexity correspondence between $\ACzero$ and first-order logic in Section \ref{sec:prelims3}) is {\em fan-in $n$ depth}, that is, the minimum depth required to compute a Boolean function by $\ACzero$ circuits with fan-in restricted to $n$. Note that $\ACzero$ formula size lower bounds imply fan-in $n$ depth lower bounds: if $f$ has $\ACzero$ formula size $\omega(n^c)$, then its fan-in $n$ formula depth is at least $c$ (for sufficiently large $n$). (This follows from the observation that every depth-$d$ formula with fan-in $n$ is equivalent to a depth-$d$ formula of size at most $n^d$.)

\subsection{Monotone Projections}

\begin{df}[Monotone-Projection Reductions]
For Boolean functions $f : \{0,1\}^I \to \{0,1\}$ and $g : \{0,1\}^J \to \{0,1\}$, 
a {\em monotone-projection reduction} from $f$ to $g$ is a map $\rho : J \to I \cup \{0,1\}$ such that $f(x) = g(\rho^\ast(x))$ for all $x \in \{0,1\}^I$ where $\rho^\ast(x) \in \{0,1\}^J$ is defined by
\[
  (\rho^\ast(x))_j 
  = 
  \begin{cases} 
  x_i &\text{if } \rho(j) = i \in I,\\
  0 &\text{if } \rho(j) = 0,\\ 
  1 &\text{if } \rho(j) = 1.
  \end{cases}
\]
(Properly speaking, the ``reduction'' from $f$ to $g$ is the map $\rho^\ast : \{0,1\}^I \to \{0,1\}^J$ induced by $\rho$.) Notation $f \lemp g$ denotes the existence of a monotone-projection reduction from $f$ to $g$.
\end{df}

When describing monotone-projection reductions later in this paper, it will be natural to speak in terms of indexed sets of Boolean variables $\{X_i\}_{i \in I}$ and $\{Y_j\}_{j \in J}$, rather than sets $I$ and $J$ themselves. Thus, a monotone-projection reduction $\rho : J \to I \cup \{0,1\}$ associates each variable $Y_j$ with either a constant ($0$ or $1$) or some variable $X_i$.

Note that $\lemp$ is a partial order on Boolean functions. This is the simplest kind of reduction in complexity theory. It has the nice property that every standard complexity measure on Boolean functions is monotone under $\lemp$. For instance, letting $L_d(f)$ denote the depth-$d$ formula size of $f$, we have $f \lemp g \,\Longrightarrow\, L_d(f) \le L_d(g)$.

\subsection{Tree-Width and Tree-Depth} 

{\em Graphs} in this paper are finite simple graphs. (In contrast to the previous discussion of infinite structures, we assume finiteness whenever we speak of graphs.) Formally, a graph $G$ is a pair $(V(G),E(G))$ where $V(G)$ is a finite set and $E(G) \subseteq \binom{V(G)}{2}$ is a set of unordered pairs of vertices.

Four specific graphs that arise in this paper: for $k \ge 1$, let $K_k$ denote the complete graph of order $k$, let $P_k$ denote the path of order $k$, let $B_k$ denote the complete binary tree of height $k$ (where every leaf-to-root path has order $k$), and let $\mathit{Grid}_{k\times k}$ denote the $k \times k$ grid graph. (In the case $k=1$, all four of these graphs are a single vertex.)

We recall the definitions of two structural parameters, tree-width and tree-depth, which play an important role in this paper. A {\em tree decomposition} of a graph $G$ consists of a tree $T$ and a family $\mc W = \{W_t\}_{t \in V(T)}$ of sets $W_t \subseteq V(G)$ satisfying
\begin{itemize}
  \item
    $\bigcup_{t \in V(T)} W_t = V(G)$ and every edge of $G$ has both ends in some $W_t$, and
  \item
    if $t,t',t'' \in V(T)$ and $t'$ lies on the path in $T$ between $t$ and $t''$, then $W_t \cap W_{t''} \subseteq W_{t'}$.
\end{itemize}
The {\em tree-width} of $G$, denoted $\tw(G)$, is the minimum of $\max_{t \in V(T)} |W_t|-1$ over all tree decompositions $(T,\mc W)$ of $G$.

The {\em tree-depth} of $G$, denoted $\td(G)$, is the minimum height of a rooted forest $F$ such that $V(F)=V(G)$ and every edge of $G$ has both ends in some branch in $F$ (i.e.\ for every $\{v,w\} \in E(G)$, vertices $v$ and $w$ have an ancestor-descendant relationship in $F$). There is also an inductive characterization of tree-depth: if $G$ has connected components $G_1,\dots,G_t$, then
\[
  \td(G) = 
  \begin{cases}
    1 &\text{if $|V(G)| = 1$,}\\
    1+\ds\min_{v \in V(G)} \td(G - v) &\text{if }t=1\text{ and } |V(G)|>1,\\
    \ds\max_{i \in \{1,\dots,t\}} \td(G_i) &\text{if }t>1.
  \end{cases}
\]

These two structural parameters, tree-width and tree-depth, are related by inequalities:
\begin{equation}\label{eq:twtd}
  \tw(G) \le \td(G) - 1 \le \tw(G)\cdot\log|V(G)|.
\end{equation}
Tree-depth is also related the length of the longest path in $G$, denoted $\lp(G)$: 
\begin{equation}\label{eq:lptd}
  \log(\lp(G)+1) \le \td(G) \le \lp(G).
\end{equation}
(See Ch.\ 6 of \cite{nevsetril2012sparsity} for background on tree-depth and proofs of these inequalities.)\bigskip

Graph parameters $\tw(\cdot)$ and $\td(\cdot)$, as well as $\lp(\cdot)$, are easily seen to be monotone under the graph-minor relation. A reminder what this means: recall that a graph $H$ is a {\em minor} of a graph $G$, denoted $H \preceq G$, if $H$ can be obtained from $G$ by a sequence of edge contractions and vertex/edge deletions. A graph parameter $f : \{$graphs$\} \to \N$ is said to be {\em minor-monotone} if $H \preceq G \,\Longrightarrow\, f(H) \le f(G)$ for all graphs $H$ and $G$.

\section{The Colored $G$-Subgraph Isomorphism Problem}\label{sec:subgraph-isomorphism}

In this section, we introduce the colored $G$-subgraph isomorphism problem and state the known upper and lower bounds on its complexity with respect to $\ACzero$ circuits and formulas.

\begin{df}
For a graph $G$ and $n \in \N$, the {\em blow-up} $G^{{\uparrow}n}$ is the graph defined by
\begin{align*}
  V(G^{{\uparrow}n}) &= V(G) \times [n],\\
  E(G^{{\uparrow}n}) &= 
  \big\{\{(v,a),(w,b)\} : \{v,w\} \in E(G),\, a,b \in [n]\big\}.
\intertext{For $\alpha \in [n]^{V(G)}$, let $G^{(\alpha)}$ denote the subgraph of $G^{{\uparrow}n}$ defined by}
  V(G^{(\alpha)}) &= \big\{(v,\alpha_v) : v \in V(G)\big\},\\
  E(G^{(\alpha)}) &= \big\{\{(v,\alpha_v),(w,\alpha_w)\} : \{v,w\} \in E(G)\big\}.
\end{align*}
(Note that each $G^{(\alpha)}$ is an isomorphic copy of $G$.)
\end{df}

\begin{df}
For any fixed graph $G$, the {\em colored $G$-subgraph isomorphism problem} asks, given a subgraph $X \subseteq G^{{\uparrow}n}$, to determine whether or not there exists $\alpha \in [n]^{V(G)}$ such that $G^{(\alpha)} \subseteq X$. For complexity purposes, we view this problem as a Boolean function $\SUB{G,n} : \{0,1\}^{|E(G)|{\cdot}n^2} \to \{0,1\}$ with variables $\{X_e\}_{\smash{e \in E(G^{{\uparrow}n})}}$. We write $\SUB{G}$ for the sequence of Boolean functions $\{\SUB{G,n}\}_{n \in \N}$.
\end{df}

\subsection{Minor-Monotonicity}
 
The following observation appears in \cite{li2014ac0}.

\begin{prop}\label{prop:sub-sub}
If $H$ is a minor of $G$, then $\SUB{H} \lemp \SUB{G}$ (i.e.\ $\SUB{H,n} \lemp \SUB{G,n}$ for all $n \in \N$).
\end{prop}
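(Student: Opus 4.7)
The plan is to reduce to the three elementary operations that generate the minor relation, namely edge deletion, vertex deletion, and edge contraction, and to exhibit an explicit monotone projection in each case. Since $\lemp$ is a partial order on Boolean functions (hence transitive), and any $H \preceq G$ is obtained by a finite sequence of such operations, the three individual cases compose to give $\SUB{H,n} \lemp \SUB{G,n}$ for every $n \in \N$.

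The two deletion cases are straightforward. If $H = G - e$ with $e = \{u,v\}$, then $V(H) = V(G)$ and the variables of $\SUB{H,n}$ form a subset of those of $\SUB{G,n}$; define $\rho$ to send every variable $\{(x,a),(y,b)\}$ of $G^{{\uparrow}n}$ with $\{x,y\} \in E(H)$ to itself and every blow-up copy $\{(u,a),(v,b)\}$ of $e$ to the constant $1$. Any witness $\alpha$ for $H^{(\alpha)} \subseteq X$ is automatically a witness for $G^{(\alpha)} \subseteq \rho^\ast(X)$ because the extra edges are forced to $1$, and the converse is immediate. For $H = G - v$, one similarly fixes every blow-up edge of $G^{{\uparrow}n}$ incident to a copy of $v$ to the constant $1$ and is free to choose any color for $v$ when extending $\alpha$ from $V(H)$ to $V(G)$.

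The main case is edge contraction, where I expect the real work. Let $H$ arise from $G$ by contracting $e = \{u,v\}$ into a new vertex $w$. I partition the variables of $\SUB{G,n}$ according to how the underlying edge $\{x,y\} \in E(G)$ meets $\{u,v\}$:
\begin{itemize}
  \item[(i)] if $x,y \notin \{u,v\}$, map $\{(x,a),(y,b)\}$ to itself (as a variable of $\SUB{H,n}$);
  \item[(ii)] if exactly one endpoint, say $x$, lies in $\{u,v\}$, map $\{(x,a),(y,b)\}$ to the $H^{{\uparrow}n}$-variable $\{(w,a),(y,b)\}$, allowing two $G^{{\uparrow}n}$-variables to collapse to one when $\{u,y\}$ and $\{v,y\}$ are both edges of $G$;
  \item[(iii)] for the contracted edge itself, map $\{(u,a),(v,b)\}$ to the constant $1$ if $a=b$ and to $0$ otherwise.
\end{itemize}
Correctness is then a direct verification. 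Given a witness $\alpha$ for $H^{(\alpha)} \subseteq X$, setting $\alpha'(u) = \alpha'(v) = \alpha(w)$ and $\alpha' = \alpha$ elsewhere yields $G^{(\alpha')} \subseteq \rho^\ast(X)$: case (i) edges are preserved verbatim, case (ii) edges relabel to actual edges of $H^{(\alpha)}$, and the lone case (iii) edge of $G^{(\alpha')}$ is ``diagonal'' and so maps to $1$. Conversely, any $\alpha'$ witnessing $G^{(\alpha')} \subseteq \rho^\ast(X)$ must satisfy $\alpha'(u) = \alpha'(v)$ by case (iii); letting $\alpha(w)$ be this common value recovers $H^{(\alpha)} \subseteq X$ via case (ii).

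The delicate point — and the only real obstacle — is case (iii). A naive reduction that simply sent every $\{(u,a),(v,b)\}$ to $1$ would fail, because it would permit $\alpha'(u) \ne \alpha'(v)$, and then the two halves of $N_H(w)$ inherited from $N_G(u)$ and $N_G(v)$ would demand two different colors for $w$, breaking the backward implication. The mix of $0$'s and $1$'s prescribed above is precisely what is needed to enforce the equality $\alpha'(u) = \alpha'(v)$ within the language of monotone projections, after which case (ii) pins down $\alpha(w)$ unambiguously.
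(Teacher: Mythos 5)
Your proof is correct and follows essentially the same route as the paper's: reduce by transitivity of $\lemp$ to single elementary minor operations, send untouched blow-up edges to themselves, force deleted edges to the constant $1$, and plant the ``diagonal'' pattern ($1$ if $a=b$, $0$ otherwise) on the contracted edge to glue the two fibres. The only differences are cosmetic\,---\,you treat vertex deletion as a separate (trivial) case and spell out the correctness verification in more detail than the paper does.
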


\begin{proof}
By transitivity of $\lemp$, it suffices to consider the two cases where $H$ is obtained from $G$ via deleting or contracting a single edge $\{v,w\} \in E(G)$. In both cases, the monotone projection maps each variable $X_{\{(v',a),(w',b)\}}$ of $\SUB{G}$ with $\{v',w'\} \ne \{v,w'\}$ to the correspond variable $Y_{\{(v',a),(w',b)\}}$ of $\SUB{H}$. In the deletion case, we set the variable $X_{\{(v,a),(w,b)\}}$ to the constant $1$ for all $a,b \in [n]$. In the contraction case, we set $X_{\{(v,a),(w,b)\}}$ to $1$ if $a=b$ and to $0$ if $a \ne b$. (This ``planted perfect matching'' has the effect of gluing the $v$-fibre and the $w$-fibre for instances of $\SUB{H}$.) 
\end{proof}

Proposition \ref{prop:sub-sub} implies that the graph parameter $G \mapsto \mu(\SUB{G})$ is minor-monotone for any standard complexity measure $\mu : \{$Boolean functions$\} \to \N$ (e.g.\ depth-$d$ $\ACzero$ formula size). It also implies:

\begin{cor}\label{cor:path-sub}
For all graphs $G$, $\SUB{P_{\td(G)}} \lemp \SUB{G}$.
\end{cor}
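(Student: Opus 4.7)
The plan is to derive the corollary directly from Proposition \ref{prop:sub-sub} combined with inequality (\ref{eq:lptd}). Since $\lemp$ is transitive and $\SUB{H} \lemp \SUB{G}$ whenever $H \preceq G$, it suffices to exhibit $P_{\td(G)}$ as a minor of $G$.

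First I would invoke the right-hand inequality of (\ref{eq:lptd}), namely $\td(G) \le \lp(G)$, which says that $G$ contains a path on $\lp(G) \ge \td(G)$ vertices. That longest path is, in particular, a subgraph of $G$ isomorphic to $P_{\lp(G)}$, and $P_{\td(G)}$ is in turn a subgraph of $P_{\lp(G)}$ (e.g.\ as an initial segment). Since any subgraph is obtainable by edge and vertex deletions alone, both of these containments are also minor relations, so $P_{\td(G)} \preceq P_{\lp(G)} \preceq G$.

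Applying Proposition \ref{prop:sub-sub} to the minor $P_{\td(G)} \preceq G$ then yields $\SUB{P_{\td(G)}} \lemp \SUB{G}$, which is exactly the statement of the corollary. There is no real obstacle here; the argument is a one-line composition of (\ref{eq:lptd}) with minor-monotonicity of $\SUB{\cdot}$ under $\lemp$, and the only thing to verify is the trivial fact that the longest-path subgraph is a minor.
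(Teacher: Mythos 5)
Your proposal is correct and follows exactly the paper's own argument: apply the inequality $\td(G) \le \lp(G)$ from (\ref{eq:lptd}) to find $P_{\td(G)}$ as a subgraph (hence minor) of $G$, then conclude by Proposition \ref{prop:sub-sub}. Nothing to add.
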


\begin{proof}
Recall that $\td(G) \le \lp(G)$ by inequality (\ref{eq:lptd}). That is, every graph $G$ contains a path of length $\td(G)$.\footnote{This fact is straightforward to prove. Consider the case that $G$ is connected. Starting at any vertex of $G$, constructed a rooted tree $T$ by a depth-first search. Observe that for every edge $\{v,w\} \in E(G)$, it must be the case that $v$ and $w$ lie in a common branch of $T$. Therefore, the height of $T$ is an upper bound on $\td(G)$. On the other hand, note that each root-to-leaf branch of $T$ is a path in $G$. Therefore, the height of $T$ is a lower bound on $\lp(G)$.} Since subgraphs are minors, we have $P_{\td(G)} \preceq G$ and therefore $\SUB{P_{\td(G)}} \lemp \SUB{G}$ by Proposition \ref{prop:sub-sub}.
\end{proof}

\subsection{Upper Bounds}

The obvious ``brute-force'' way of solving $\SUB{G}$ has running time $O(n^{|V(G)|})$: given an input $X \subseteq G^{{\uparrow}n}$, check if $G^{(\alpha)} \subseteq X$ for each $\alpha \in [n]^{V(G)}$. A better upper bound comes from tree-width: based on an optimal tree-decomposition $(T,\mc W)$, there is a dynamic-programming algorithm with running time $n^{\tw(G)+O(1)}$ \cite{plehn1990finding}. This algorithm can be implemented by $\ACzero$ circuits of size $n^{\tw(G)+O(1)}$ and depth $O(|V(G)|)$.\footnote{It may be possible to achieve running times of $n^{\delta\cdot\tw(G)+O(1)}$ for constants $\delta < 1$ using fast matrix multiplication algorithms (cp.\ \cite{williams2014faster}). However, these algorithms appear to require logarithmic-depth circuits. For unrestricted Boolean circuits, no upper bound better than $n^{O(\tw(G))}$ is known, and in fact Marx \cite{marx2010can} has shown that the Strong Exponential Time Hypothesis rules out circuits smaller than $n^{O(\tw(G)/\log \tw(G))}$.}

Unlike circuits, formulas cannot faithfully implement dynamic-programming algorithms. The fastest known formulas for $\SUB{G}$ are tied to tree-depth: based on a minimum-height rooted forest $F$ witnessing $\td(G)$, there are $\ACzero$ formulas of size $n^{\td(G)+O(1)}$ solving $\SUB{G}$ (which come from $\ACzero$ circuits of depth $\td(G)+O(1)$ and fan-in $O(n)$). For future reference, these upper bounds are stated in the following proposition.\footnote{For the {\em uncolored} $G$-subgraph isomorphism graph, one gets essentially the same upper bounds via a reduction to $\SUB{G}$ using the ``color-coding'' technique of Alon, Yuster and Zwick \cite{AYZ95}. Amano \cite{Amano10} observed that this uncolored-to-colored reduction can be implemented by $\ACzero$ circuits.}

\begin{prop}\label{prop:upper-bounds}
For all graphs $G$, $\SUB{G}$ is solvable by $\ACzero$ circuits of size $n^{\tw(G)+O(1)}$, as well as by $\ACzero$ formulas of size $n^{\td(G) + O(1)}$.
\end{prop}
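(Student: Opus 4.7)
I would prove the two bounds separately by unfolding a standard dynamic program over the witnessing decomposition; in each case the circuit/formula just mirrors the structure of the decomposition tree of $G$.

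\textbf{Circuit bound from tree-width.} Fix a tree decomposition $(T,\mc W)$ of $G$ of width $\tw(G)$, which after standard preprocessing may be assumed to be a rooted binary tree on $O(|V(G)|)$ nodes. For each bag $W_t$ and each coloring $\alpha : W_t \to [n]$ of its vertices, I would introduce a Boolean value $Q_{t,\alpha}(X)$ equal to $1$ iff $\alpha$ extends to a coloring of all vertices appearing in the subtree rooted at $t$ that realises the induced subgraph inside the input $X \subseteq G^{{\uparrow}n}$. At a leaf bag, $Q_{t,\alpha}$ is the AND of input variables $X_{\{(v,\alpha_v),(w,\alpha_w)\}}$ over edges $\{v,w\} \in E(G)$ internal to $W_t$. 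At an internal bag with children $t_1,t_2$, $Q_{t,\alpha}$ is the AND of its local edge checks with, for each $j$, the OR over colorings $\beta_j$ of $W_{t_j}$ agreeing with $\alpha$ on $W_t \cap W_{t_j}$ of $Q_{t_j,\beta_j}$. Finally $\SUB{G,n}(X) = \bigvee_\alpha Q_{\mr{root},\alpha}(X)$. Each bag indexes $n^{\tw(G)+1}$ values and each gate has fan-in at most $n^{O(\tw(G))}$, so the total gate count is $n^{\tw(G)+O(1)}$ and the depth is $O(|V(G)|) = O(1)$ for fixed~$G$.

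\textbf{Formula bound from tree-depth.} Fix a rooted forest $F$ of height $\td(G)$ on $V(G)$ witnessing tree-depth, and let $\mr{Anc}(v)$ denote the set of strict $F$-ancestors of $v$. Recursively define
\[
  \varphi_{v,\alpha}(X) \;=\; \bigvee_{c \in [n]}\,\biggl(\;\bigwedge_{\substack{w \in \mr{Anc}(v)\\ \{v,w\} \in E(G)}} \!\! X_{\{(v,c),(w,\alpha_w)\}} \;\wedge\; \bigwedge_{u \text{ child of } v \text{ in } F} \varphi_{u,\alpha \cup \{v \mapsto c\}}(X)\biggr)
\]
for each $v \in V(G)$ and coloring $\alpha : \mr{Anc}(v) \to [n]$, and output $\bigwedge_{r}\varphi_{r,\emp}(X)$ over roots $r$ of $F$. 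Correctness is immediate since any edge $\{v,w\} \in E(G)$ connects $F$-comparable vertices and is therefore checked at whichever endpoint is the $F$-descendant. The formula has alternating $\mr{OR}/\mr{AND}$ layers of total depth $O(\td(G)) = O(1)$. For size, each $\varphi_{v,\alpha}$ is instantiated at most $n^{\td(G)-1}$ times (once per ancestor coloring) and contributes $O(|V(G)|)$ leaves locally, yielding a total of $O(|V(G)|^2) \cdot n^{\td(G)} = n^{\td(G)+O(1)}$ leaves.

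\textbf{Main obstacle.} Both constructions are essentially standard unfoldings of tree-decomposition dynamic programming. The only bookkeeping points to watch are (i)~arranging the tree decomposition in the first part to have $O(|V(G)|)$ nodes so that constant circuit depth is preserved as $n$ grows, and (ii)~confirming, despite the complete absence of sub-formula sharing in the second part, that the unfolded formula still has size $n^{\td(G)+O(1)}$; this works precisely because every edge of $G$ connects an $F$-ancestor and an $F$-descendant, so the relevant ``color context'' at vertex $v$ consists only of the $\le \td(G)$ colors already chosen along the root-to-$v$ branch of $F$. I do not expect any new ideas beyond these standard tree-decomposition arguments.
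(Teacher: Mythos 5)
Your proposal is correct and follows essentially the same route the paper indicates: the paper proves this proposition only by pointing to the tree-decomposition dynamic program of Plehn--Voigt for the circuit bound and to the unfolded recursion over a height-$\td(G)$ rooted forest for the formula bound, which is exactly what you unfold in detail. Both your constructions and size/depth counts check out.
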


\subsection{Lower Bounds: $\ACzero$ Circuit Size}

Previous work of the author \cite{rossman2008constant} showed that the $\ACzero$ circuit size of $\SUB{K_k}$ (a.k.a.\ the (colored) $k$-CLIQUE problem) is $n^{\Omega(k)}$ for every $k \in \N$. Generalizing the technique of \cite{rossman2008constant}, Amano \cite{Amano10} gave a lower bound on the $\ACzero$ circuit size of $\SUB{G}$ for arbitrary graphs $G$. In particular, he showed that the $\ACzero$ circuit size of $\SUB{\mathit{Grid}_{k \times k}}$ is $n^{\Omega(k)}$. This result, combined with the recent Polynomial Grid-Minor Theorem\footnote{This states every graph $G$ of tree-width $k$ contains an $\Omega(k^\eps)\times\Omega(k^\eps)$ grid minor for an absolute constant $\eps > 0$.} of Chekuri and Chuznoy \cite{chekuri2014polynomial}, implies that the $\ACzero$ circuit size of $\SUB{G}$ is $n^{\Omega(\tw(G)^\eps)}$ for an absolute constant $\eps > 0$. An even stronger lower bound was subsequently proved by Li, Razborov and Rossman \cite{li2014ac0} (without appealing to the Polynomial Grid-Minor Theorem).

\begin{thm}\label{thm:LRR}
For all graphs $G$, the $\ACzero$ circuit size of $\SUB{G}$ is $n^{\Omega(\tw(G)/\log\tw(G))}$.
\end{thm}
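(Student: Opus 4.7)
My plan is to generalize the $n^{\Omega(k)}$ $\ACzero$ circuit lower bound for $\SUB{K_k}$ (due to \cite{rossman2008constant}) to arbitrary graphs $G$, in a way that avoids the polynomial loss incurred by first reducing to a grid minor via Proposition \ref{prop:sub-sub}. The target bound $n^{\Omega(\tw(G)/\log\tw(G))}$ is suggestive: a parameter intrinsically attached to $G$ must drive the analysis, rather than a grid recovered from inside $G$ (which would only give $\tw(G)^\eps$ after invoking the Polynomial Grid-Minor Theorem).

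The first step is a distributional attack. For each $G$ I would design a pair of nearly indistinguishable input distributions on $\{0,1\}^{|E(G)|\cdot n^2}$, a ``positive'' distribution supported on $1$-inputs of $\SUB{G}$ and a ``negative'' distribution supported on $0$-inputs, such that any small $\ACzero$ circuit fails to separate them. The natural template is to take a random subgraph of $G^{{\uparrow}n}$ at a threshold density $p=p(G,n)$, chosen so that the expected number of copies $G^{(\alpha)}$ is $\Theta(1)$; for $K_k$ this is $p=n^{-2/(k-1)}$, governed by the $2$-density $m_2(K_k)\approx k/2$, and the generalization to $G$ uses $m_2(G)$ or a closely related minor-monotone variant. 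On the positive side one plants a uniformly random $G^{(\alpha)}$ on top of this random background.

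The second step is to carry out a random-restriction / switching argument at this critical density. By a Håstad-style switching lemma tuned to $p$, a depth-$d$ circuit of size $s$ collapses, under a random restriction, to a shallow decision tree with high probability. One then argues that any decision tree distinguishing the two distributions must effectively certify an entire labeled subgraph isomorphic to some ``core'' of $G$; an entropy or counting argument pins its depth at $\Omega(\mu(G))$ for a graph parameter $\mu(G)$ that falls out of the analysis. Combining these gives $\ACzero$ circuit size $n^{\Omega(\mu(G))}$.

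The main obstacle is the final combinatorial step: proving $\mu(G)=\Omega(\tw(G)/\log\tw(G))$ for every $G$. Without this, one only recovers cases (cliques, grids, or graphs handled through a grid minor) where the parameter is transparent but loses a polynomial factor. The task is to show directly, by a structural argument inside $G$ — a nested family of balanced separators, a dense balanced minor, or an appropriately weighted subgraph witnessing large $m_2$ — that whatever parameter governs the switching analysis can be lost to $\tw(G)$ by at most a logarithmic factor. Finding such a log-optimal structural reduction, bypassing the grid-minor detour, is in my view the real heart of the argument, and it is precisely what the direct approach of Li, Razborov and Rossman must supply.
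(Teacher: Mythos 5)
The paper does not actually prove Theorem \ref{thm:LRR}; it imports it wholesale from Li--Razborov--Rossman \cite{li2014ac0} and uses it as a black box. So there is no in-paper argument to measure your proposal against, and the only question is whether your proposal constitutes a proof on its own. It does not: it is a research plan whose decisive step you yourself flag as open. You correctly identify the general shape of the known argument --- planted versus un-planted distributions of subgraphs of $G^{{\uparrow}n}$ at a critical density, a restriction-based collapse of small constant-depth circuits, and a combinatorial parameter $\mu(G)$ extracted from the analysis --- but the theorem's entire content lives in the two steps you leave unargued. First, the claim that ``any decision tree distinguishing the two distributions must effectively certify an entire labeled subgraph isomorphic to some core of $G$'' is asserted, not proved; for general $G$ this is exactly where a naive H{\aa}stad-style switching argument breaks down (Rossman's $K_k$ bound already required a bespoke sensitivity argument rather than the off-the-shelf switching lemma, and the generalization to arbitrary $G$ is the technical bulk of \cite{li2014ac0}). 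Second, and as you concede explicitly, the inequality $\mu(G) = \Omega(\tw(G)/\log\tw(G))$ --- the structural reduction from whatever parameter the restriction analysis yields down to treewidth with only a logarithmic loss, bypassing the grid-minor detour --- is precisely the theorem's point of difficulty, and ``this is what Li, Razborov and Rossman must supply'' is not a proof of it.

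To be concrete about why the gap is fatal rather than cosmetic: without pinning down $\mu(G)$ one cannot even verify that your threshold density $p(G,n)$ (governed by $m_2(G)$ or a variant) produces distributions that small circuits fail to separate, since for graphs with very unbalanced density profiles the ``expected $\Theta(1)$ copies'' heuristic places the planted and un-planted distributions at parameters where local statistics already differ. The actual argument in \cite{li2014ac0} defines its parameter $\kappa(G)$ via a min-max over hereditary families of subgraph patterns precisely to control this, and then relates $\kappa(G)$ to $\tw(G)$ by a separate combinatorial argument (essentially an LP-duality/balanced-separator computation). Your proposal names the right target but supplies neither the definition of the parameter nor the comparison to treewidth, so it should be assessed as an outline of the known strategy rather than a proof.
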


This result is nearly tight, as it matches the upper bound of Proposition \ref{prop:upper-bounds} up to the $O(\log\tw(G))$ factor in the exponent.

\subsection{Lower Bounds: $\ACzero$ Formula Size}

For the main result of this paper (Theorem \ref{thm:improved}), we require a lower bound on the $\ACzero$ formula size of $\SUB{G}$ (or in fact on the fan-in $O(n)$ depth of $\SUB{G}$).  Since formulas are a subclass of circuits, Theorem \ref{thm:LRR} implies that the $\ACzero$ formula size of $\SUB{G}$ is at least $n^{\Omega(\tw(G)/\log\tw(G))}$.  However, this does not match the $n^{\td(G)+O(1)}$ lower bound of Proposition \ref{prop:upper-bounds}, since $\td(G)$ may be larger than $\tw(G)$ (by up to a $\log|V(G)|$ factor).  In particular, the path $P_k$ has tree-width $1$ and tree-depth $\lceil \log(k+1) \rceil$.  Although Theorem \ref{thm:LRR} gives no non-trivial lower bound on the $\ACzero$ formula size of $\SUB{P_k}$, a nearly optimal lower bound was proved in different work of the author \cite{rossman2014formulas}:

\begin{thm}\label{thm:Pk}
The $\ACzero$ formula size of $\SUB{P_k}$ is $n^{\Omega(\log k)}$. More precisely, the \mbox{depth-$d$} formula size of $\SUB{P_k,n}$ is $n^{\Omega(\log k)}$ for all $k,d,n \in \N$ with $k \le \log\log n$ and $d \le \smash{\frac{\log n}{(\log\log n)^3}}$.
\end{thm}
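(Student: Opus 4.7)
The plan is to prove this lower bound via a combinatorial framework of \emph{pathset complexity}, which lower-bounds $\ACzero$ formula size through a density-aware measure on families of partial solutions. For a subset $S \subseteq V(P_k)$, a \emph{pathset} on $S$ is a subset $A \subseteq [n]^S$; unions of pathsets correspond to $\lor$-gates and natural joins $A \bowtie B$ to $\land$-gates. I would introduce a complexity measure $\chi(\cdot)$ on pathsets satisfying $\chi(A \cup B) \le \chi(A) + \chi(B)$ together with a bounded blowup under $\bowtie$ that is sensitive to how densely the two sides intersect on their shared coordinates.

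The first step is a \emph{reduction}: any depth-$d$ formula $F$ of size $s$ computing $\SUB{P_k,n}$ unrolls top-down into a union-of-joins decomposition of the accepting-input pathset of total $\chi$-cost at most $s \cdot \mathrm{poly}(n)$. This uses distributivity of $\bowtie$ over $\cup$ and the fact that atomic pathsets (corresponding to edge literals of the input) have small, controlled $\chi$.

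The main combinatorial step is to show that the full $P_k$-pathset has $\chi \ge n^{\Omega(\log k)}$. Here I would exploit the doubling structure of paths: $P_{2k-1}$ decomposes as two copies of $P_k$ glued at a middle vertex $v$. Via a carefully chosen random restriction that fixes $v$ and simplifies a single formula layer, one shows that the $\chi$-cost of the full $P_{2k-1}$-pathset is at least a constant multiple of the product of the $\chi$-costs of the two conditioned $P_k$-subpathsets; an averaging argument over the restriction yields a multiplicative blowup. Iterating this doubling step $\log k$ times, starting from the trivial base case $\SUB{P_1,n}$ (a simple $\lor$ of $n$ inputs), produces the $n^{\Omega(\log k)}$ bound.

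The main obstacle is the delicate control of pathset \emph{density} under the sequence of random restrictions: one must show that each restriction kills at most one formula layer while retaining enough of the instance to recover the sub-pathsets required by the induction. The parameter constraints $k \le \log\log n$ and $d \le \log n/(\log\log n)^3$ enter precisely here, guaranteeing enough restriction budget to peel all $d$ layers of the formula while leaving a $P_k$-instance of effective side-length $\mathrm{poly}(n)$; outside this regime the random restriction loses too many vertices per level and the induction breaks. Once the pathset lower bound is established, the $\ACzero$ formula size bound of Theorem~\ref{thm:Pk} follows immediately via the reduction.
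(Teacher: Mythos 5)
First, a point of context: this paper does not prove Theorem~\ref{thm:Pk} at all; it imports it as a black box from \cite{rossman2014formulas}, and your proposal is (correctly) a sketch of that reference's pathset-complexity method rather than of anything argued here. At the level of named ingredients your plan matches the cited proof: pathsets $A \subseteq [n]^S$, unions for $\lor$-gates and joins $\bowtie$ for $\land$-gates, a cost measure $\chi$ that is subadditive under union, a reduction from formula size to $\chi$, and a recursive lower bound driven by the doubling structure of $P_k$. That is the right skeleton.

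However, as written there are genuine gaps. (1) The statement ``the full $P_k$-pathset has $\chi \ge n^{\Omega(\log k)}$'' cannot be the target as you have set things up: the unconstrained complete relation $[n]^{V(P_k)}$ is the $k$-fold join of the full single-coordinate pathsets, each a union of $n$ singletons, so any measure obeying your two axioms gives it cost $O(kn)$. The entire content of the framework lies in the \emph{smallness} (projection-density) constraints under which $\chi$ is defined and under which joins are permitted; the lower bound is for pathsets that are simultaneously small and dense, and you never state these constraints. (2) The reduction cannot go through ``distributivity of $\bowtie$ over $\cup$,'' which blows up the decomposition exponentially in the depth; in \cite{rossman2014formulas} one instead associates to each gate and each pattern a pathset directly (via a carefully chosen random input distribution: a planted colored path plus random background edges) and proves the union/join containments gate by gate, with the randomness needed precisely to certify the smallness condition at every gate. (3) Relatedly, you have placed the probabilistic argument inside the pathset-complexity lower bound (``a random restriction that \dots simplifies a single formula layer''), but that lower bound is a purely combinatorial statement about small pathsets, proved by induction on patterns using the fact that any join decomposition of a path leaves a long subpath on one side; the randomness lives only in the formula-to-pathset reduction, and the constraints $k \le \log\log n$ and $d \le \log n/(\log\log n)^3$ arise from union bounds over the (doubly exponential in $k$) pattern space and from the per-layer degradation of the smallness parameter, not from a switching-lemma-style ``restriction budget.'' Without the smallness constraints and the correct placement of the probabilistic step, the plan does not yet constitute a proof.
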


Via the relationship between $\ACzero$ formula size and fan-in $O(n)$ circuit depth, Theorem \ref{thm:Pk} implies:

\begin{cor}\label{cor:fanin-depth}
Circuits with fan-in $O(n)$ computing $\SUB{P_k}$ have depth $\Omega(\log k)$.
\end{cor}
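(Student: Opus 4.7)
The plan is to derive the corollary by the standard reduction from bounded-fan-in circuits to formulas, followed by a direct application of Theorem~\ref{thm:Pk}. Suppose $\SUB{P_k}$ is computed by a family of Boolean circuits of depth $d = d(k,n)$ in which every gate has fan-in at most $cn$ for some absolute constant $c$; the goal is to show $d \ge \Omega(\log k)$.

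First I would unfold each such circuit into an equivalent depth-$d$ formula by duplicating, for every gate $g$ of fan-out greater than $1$, the entire sub-circuit rooted at $g$ once per outgoing wire. The resulting tree has depth $d$ and fan-in at most $cn$ at every gate, hence at most $\sum_{i=0}^{d}(cn)^i \le 2(cn)^d = O_d(n^d)$ nodes. This is exactly the unfolding observation quoted in Section~\ref{sec:prelims2} (which was stated there for formulas, but which holds verbatim for bounded-fan-in circuits by the same counting argument).

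Next I would invoke Theorem~\ref{thm:Pk}, which asserts that for every triple $(k,d,n)$ with $k \le \log\log n$ and $d \le \log n/(\log\log n)^3$, the depth-$d$ formula size of $\SUB{P_k,n}$ is at least $n^{\Omega(\log k)}$. For any fixed $k$ I would let $n$ grow. Comparing the lower bound $n^{\Omega(\log k)}$ on formula size with the upper bound $O_d(n^d)$ coming from the unfolding, and taking logarithm in base $n$, forces $d \ge \Omega(\log k)$, as required.

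The only point needing any care, and thus the main (minor) obstacle, is verifying that the circuit depth $d$ actually lies within the admissible parameter window $d \le \log n/(\log\log n)^3$ of Theorem~\ref{thm:Pk}. This is resolved by a short bootstrap: the target bound $d = \Omega(\log k)$ sits well below $\log n/(\log\log n)^3$ once $n$ is chosen sufficiently large relative to $k$, so any hypothetical circuit depth violating the conclusion $d = \Omega(\log k)$ would a fortiori lie in the admissible range, making the contradiction with Theorem~\ref{thm:Pk} legitimate. No new ideas beyond the unfolding argument are required.
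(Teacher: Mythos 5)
Your proposal is correct and matches the paper's intended argument exactly: the paper derives Corollary~\ref{cor:fanin-depth} from Theorem~\ref{thm:Pk} via precisely the observation (recorded in Section~\ref{sec:prelims2}) that a depth-$d$ circuit of fan-in $O(n)$ unfolds into a depth-$d$ formula of size $O_d(n^d)$, after which comparing exponents forces $d \ge \Omega(\log k)$. Your care about the parameter window $k \le \log\log n$, $d \le \log n/(\log\log n)^3$ is the right (and only) point of attention, and your resolution\,---\,taking $n$ large relative to the fixed $k$ and constant depth $d$\,---\,is sound.
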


\begin{rmk}
We mention a few other lower bounds related to Corollary \ref{cor:fanin-depth}.  A recent paper of Chen, Oliveira, Servedio and Tan \cite{chen2015near} gives a nearly optimal size-depth trade-off for $\ACzero$ circuits computing $\SUB{P_k}$. Namely, they prove that the depth-$d$ circuit size of $\SUB{P_k,n}$ is $n^{\Omega(d^{-1}k^{1/(d-1)})}$ for all $k \le n^{1/5}$. (This result is incomparable to Theorem \ref{thm:Pk}.) As a corollary, this shows that circuits with fan-in $O(n)$ computing $\SUB{P_k}$ have depth $\Omega(\log k/\log\log k)$ (a slightly weaker bound than Corollary \ref{cor:fanin-depth}). Previous size-depth trade-offs due to Beame, Impagliazzo and Pitassi \cite{beame1998improved} and Ajtai \cite{ajtai1989first} imply lower bounds of $\Omega(\log\log k)$ and $\Omega(\log^\ast k)$ respectively on the fan-in $O(n)$ depth of $\SUB{P_k}$.
\end{rmk}

In Section \ref{sec:first-bound}, we use Corollary \ref{cor:fanin-depth} (together with Corollary \ref{cor:path-sub}) to prove a weak version of our main result, Theorem \ref{thm:improved}, with an exponential upper bound $\beta(k) \le 2^{O(k)}$ on the quantifier-rank blow-up. We remark that the lower bound of Chen et al.\ implies a slightly weaker upper bound of $k^{O(k)}$, while the very first non-trivial lower bound of Ajtai implies a {\em non-elementary} upper bound on $\beta(k)$ (similar to the original proof of Theorem \ref{thm:finiteHPT}). For the polynomial upper bound $\beta(k) \le k^{O(1)}$, we require a stronger $n^{\Omega(\td(G)^\eps)}$ lower bound on the $\ACzero$ formula size of $\SUB{G}$ for arbitrary graphs $G$, as we explain in Section \ref{sec:extension}.

\section{Preliminaries, III}\label{sec:prelims3}

In this section, we state a few needed lemmas on the relationship between first-order logic and $\ACzero$ formula size.  As before, let $\sigma$ be a fixed finite relational signature. However, we now stipulate that {\bf\em all structures in Sections \ref{sec:prelims3} and \ref{sec:improved} are finite}. That is, we drop the adjective ``finite'' everywhere since it is assumed. Asymptotic notation in these sections ($O(\cdot)$, etc.)\ implicitly depends on $\sigma$ (although, essentially without loss of generality, it suffices to prove our results in the special case $\sigma = \{R^{(2)}\}$ of a single binary relation).

\subsection{Descriptive Complexity: $\FO = \ACzero$}

\begin{df}[Gaifman Graphs, Encodings, $\MODEL_\Phi$]\
\begin{itemize}
\item
For a structure $\mf A$, we denote by $\Gaif(\mf A)$ the {\em Gaifman graph} of $\mf A$. This is the graph whose vertex set is the universe of $\mf A$ and whose edges are pairs $\{v,w\}$ such that $v \ne w$ and $v,w$ appear together in a tuple of any relation of $\mf A$.
\item
If $\mf A$ has universe $[n]$, then we denote by $\mr{Enc}(\mf A) \in \{0,1\}^{\wh n}$ the standard bit-string encoding of $\mf A$ where $\wh n = \sum_{R^{(t)} \in \sigma} n^t$ ($=n^{O_\sigma(1)}$). That is, each bit of $\mr{Enc}(\mf A)$ is the indicator for a tuple of some relation of $\mf A$. (Note that $\mr{Enc}(\cdot)$ is a bijection between structures with universe $[n]$ and strings in $\{0,1\}^{\wh n}$.)  
\item
For a first-order sentence $\Phi$ and $n \in \N$, let $\MODEL_{\Phi,n} : \{0,1\}^{\wh n} \to \{0,1\}$ be the Boolean function defined, for structures $\mf A$ with universe $[n]$, by 
\[
  \MODEL_{\Phi,n}(\mr{Enc}(\mf A)) = 1 \defiff \mf A \models \Phi.\qquad
\]
We write $\MODEL_\Phi$ for the sequence of Boolean functions $\{\MODEL_{\Phi,n}\}_{n \in \N}$.
\end{itemize}
\end{df}

The next lemma gives one-half of the descriptive complexity correspondence between first-order logic and $\ACzero$:

\begin{la}[``$\FO \subseteq \ACzero$'']\label{la:FO-AC0}
For all $1 \le w \le k$, if $\Phi$ is a first-order sentence of quantifier-rank $k$ and variable-width $w$, then $\MODEL_\Phi$ is computable by $\ACzero$ circuits of depth $k$ and fan-in $O(n)$ and size $O(n^w)$. These circuits are equivalent with $\ACzero$ formulas of depth $k$ and size $O(n^k)$.
\end{la}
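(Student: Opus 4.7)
The plan is to prove the lemma by structural induction on $\Phi$, building for each subformula $\varphi(\vec x)$ with $|\vec x| = j \le w$ a family of gates $\{g_{\varphi, \vec a}\}_{\vec a \in [n]^j}$ such that $g_{\varphi, \vec a}$, regarded as a function of $\mr{Enc}(\mf A)$, evaluates to $1$ iff $\mf A \models \varphi(\vec a)$; the output of $\MODEL_\Phi$ is then the single gate for the sentence $\Phi$. As preprocessing I would put $\Phi$ into negation normal form via De Morgan's laws, which preserves quantifier-rank and variable-width and pushes negations to the atomic level. The base case is immediate: $g_{R(x_{i_1},\ldots,x_{i_r}), \vec a}$ is the input bit of $\mr{Enc}(\mf A)$ encoding whether $(a_{i_1}, \ldots, a_{i_r}) \in R^{\mf A}$ (or its negation), while equality atoms under the fixed $\vec a$ become constant wires.

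For the inductive step I set $g_{\exists x\, \varphi, \vec a} \defeq \bigvee_{b \in [n]} g_{\varphi, (\vec a, b)}$, an OR of fan-in $n$, and dually for $\forall$; the connectives $\wedge$ and $\vee$ are realised by AND/OR gates combining the sub-gates at the shared free-variable assignment. The depth invariant I maintain is that the subcircuit for $\varphi$ has depth equal to the quantifier-rank of $\varphi$. This relies critically on NNF, which guarantees that each subcircuit has a well-defined root polarity (AND or OR), so that introducing a $\vee$ (resp.\ $\wedge$) on top of a subcircuit whose root is an OR (resp.\ AND) merges into the same unbounded-fan-in layer rather than creating a new one; only a genuine quantifier transition increases the depth.

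For size, each subformula contributes at most $n^j \le n^w$ indexed gates, and the number of distinct subformulas is $O_\Phi(1)$, giving total size $O(n^w)$, with every gate of fan-in at most $O(n)$ (the $n$ from quantifier-induced OR/AND gates, the $O(1)$ from Boolean-connective gates). Converting the depth-$k$, fan-in-$O(n)$ circuit into a formula then proceeds by unfolding the DAG into a tree, duplicating subcircuits at each fan-out point; a depth-$k$ tree of fan-in $O(n)$ has at most $O(n)^k = O(n^k)$ leaves, giving the stated formula-size bound. The main obstacle is the tight depth bookkeeping: a naive translation yields depth proportional to the total syntactic length of $\Phi$ rather than its quantifier-rank, and matching the depth to exactly $k$ requires both the NNF normalisation and the careful merging of consecutive same-kind Boolean connectives into quantifier-induced layers.
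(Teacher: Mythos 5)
The paper itself gives no proof of Lemma \ref{la:FO-AC0}; it is imported as the standard half of the descriptive-complexity correspondence (cf.\ \cite{Imm}). Your construction is exactly the standard one for this fact: gates $g_{\varphi,\vec a}$ indexed by subformulas and assignments to their free variables, fan-in-$n$ OR/AND gates for quantifiers, atoms read off from $\mr{Enc}(\mf A)$ with negations pushed to the inputs, and a DAG-to-tree unfolding for the formula bound. The counting is right: $O(n^j)\le O(n^w)$ gates per subformula with constantly many subformulas gives size $O(n^w)$, every gate has fan-in $O(n)$, and unfolding a depth-$k$ fan-in-$O(n)$ circuit yields a formula of size $O(n^k)$.

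The genuine gap is the depth invariant. The claim that the subcircuit for $\varphi$ has depth equal to the quantifier rank of $\varphi$ fails already in the base of the induction: $P(x)\wedge Q(x)$ has quantifier rank $0$ but its gate is an AND of two input literals, hence depth $1$ under the paper's measure (number of gates on an input-to-output path). More seriously, the merging argument does not apply when a quantifier sits on top of a connective of the \emph{opposite} polarity: for $\Phi=\exists x\,(P(x)\wedge Q(x))$, of quantifier rank $1$, your circuit is $\bigvee_{a\in[n]}\bigl(P(a)\wedge Q(a)\bigr)$, and this function provably has no depth-$1$ unbounded-fan-in circuit (it is neither a single OR nor a single AND of literals for $n\ge 2$), so no amount of NNF normalisation or layer-merging rescues the invariant. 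Similarly, a quantifier-free skeleton with nested alternating $\wedge/\vee$ contributes its own alternation depth independently of quantifier rank. What your induction actually delivers, after the standard repair of flattening each maximal quantifier-free Boolean combination (over its constantly many atoms and topmost quantified subformulas) into a constant-size CNF/DNF, is depth $k+O_{\Phi}(1)$, i.e.\ $O(k)$, not exactly $k$. This discrepancy is really against the literal wording of the lemma rather than against your construction, and it is harmless for the application in Section \ref{sec:first-bound}, which is insensitive to the additive constant in the depth and only needs formula size $n^{O(k)}$; but you should either weaken the stated depth to $O(k)$ or make the flattening step and the resulting additive constant explicit instead of asserting an invariant that is false from the base case onward.
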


(To be completely precise, each of these $O(\cdot)$ terms is really $O_{\sigma,k}(\cdot)$, that is, with constants that depend on $k$ as well as the signature $\sigma$.) We remark that Lemma \ref{la:FO-AC0} has a converse (``$\ACzero \subseteq \FO$'') with respect to both the uniform and non-uniform versions of $\ACzero$. We omit the statement of these results, since the description of $\ACzero$ circuits via first-order sentences is not needed in this paper (see \cite{Imm} for details).

\subsection{Retracts, Cores, Hom-Preserved Classes}

The last bit of required background concerns homomorphism-preserved classes of structures. We begin by defining the key notions of homomorphic equivalence and cores.

\begin{df}[Homomorphic Equivalence, (Co-)Retracts, Cores]\
\begin{itemize}
\item
Recall notation $\mf A \to \mf B$ denoting the existence of a homomorphism from $\mf A$ to $\mf B$.
\item
Structures $\mf A$ and $\mf B$ are {\em homomorphically equivalent}, denoted $\mf A \bihom \mf B$, if $\mf A \to \mf B$ and $\mf B \to \mf A$.
\item
We write $\mf A \retr \mf B$ and say that $\mf B$ is a {\em retract} of $\mf A$ and $\mf A$ is a {\em co-retract} of $\mf B$ if: (1) $\mf B$ is a substructure of $\mf A$ and (2) there exists a homomorphism $\mf A \to \mf B$ that fixes $\mf B$ pointwise (a.k.a.\ a retraction). (Note that $\mf A \retr \mf B$ implies $\mf A \bihom \mf B$.)
\item
A structure $\mf A$ is a {\em core} if it has no proper retract (that is, $\mf A \retr \mf B \,\Longrightarrow\, \mf A = \mf B$).
\end{itemize}
\end{df}

The next lemma states a few basic properties of cores (see \cite{HellNesCore,HellNes}).

\begin{la}\label{la:cores}\
\begin{enumerate}[\quad\normalfont(a)]
  \item
    Every $\bihom$-equivalence class contains a unique core up to isomorphism. (That is, every structure $\mf A$ is homomorphically equivalent to a unique core.)
  \item
    \label{la:finitely-many}
    For every $k$, there are only finitely many non-isomorphic cores of tree-depth $k$. (This number depends on the signature $\sigma$.)
  \item
    (As an aside:) If a graph $G$ is a core, then the colored and uncolored $G$-subgraph isomorphism problems are equivalent under linear-size monotone-projection reductions (see \cite{Grohe07,li2014ac0}).
\end{enumerate}
\end{la}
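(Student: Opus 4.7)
My plan is to first establish existence: every finite $\mf A$ contains a core as a retract, obtained by iteratively replacing $\mf A$ by a proper retract whenever one exists—a procedure that terminates by finiteness. The key lemma for uniqueness is that every endomorphism $e : \mf A \to \mf A$ of a finite core is an automorphism. To prove this, I would observe that the endomorphism monoid of a finite structure is finite, so some power $e^m$ is idempotent and therefore a retraction onto $e^m(\mf A)$; core-ness forces $e^m(\mf A) = \mf A$, which makes $e^m$ (and hence $e$) a bijection, and a bijective endomorphism of a finite structure is automatically an automorphism (its inverse preserves each relation because the induced injection on the finite set $R^{\mf A}$ is also a surjection). Given cores $\mf A^\ast \bihom \mf B^\ast$, fixing homomorphisms $f : \mf A^\ast \to \mf B^\ast$ and $g : \mf B^\ast \to \mf A^\ast$ makes $g \circ f$ and $f \circ g$ automorphisms, so $f$ is an isomorphism.

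\textbf{Part (b).} I would induct on $k = \td(\Gaif(\mf A))$. The disjoint-union characterization of tree-depth reduces the count to connected cores (a disjoint union is a core iff its components are pairwise $\bihom$-incomparable cores, and one verifies that each component of a core is itself a core). For the connected case, the goal is to bound $|A|$ by a function of $k$ and $\sigma$: in a core, no two distinct elements can realize the same ``type'' relative to the rest—otherwise identifying them yields an endomorphism with strictly smaller image, and hence, after the idempotency argument of part (a), a proper retract—while a rooted forest of height $k$ witnessing tree-depth constrains each element's neighborhood to a bounded piece of the ancestor chain, allowing only boundedly many such types per level. Propagating this bound along the inductive characterization of tree-depth (a connected structure of tree-depth $k$ arises from one of tree-depth $k-1$ by adjoining a single root vertex) then yields a size bound $|A| \le h(k)$, and hence finiteness up to isomorphism. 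The main obstacle is making the type-counting precise without excess technicality; I would defer to the treatment of cores and bounded tree-depth in Chapter~6 of \cite{nevsetril2012sparsity}, following \cite{HellNesCore,HellNes}.

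\textbf{Part (c).} The colored-to-uncolored reduction will reinterpret an instance $X \subseteq G^{{\uparrow}n}$ as an uncolored graph on vertex set $V(G) \times [n]$ by forgetting the color projection. Correctness hinges on the claim that when $G$ is a core, every subgraph-copy of $G$ inside $G^{{\uparrow}n}$ has image of the canonical form $G^{(\alpha)}$ for some $\alpha \in [n]^{V(G)}$: for any injective homomorphism $f : G \hookrightarrow G^{{\uparrow}n}$, composition with the first-coordinate projection $p : G^{{\uparrow}n} \to G$ yields an endomorphism $p \circ f$ of $G$, which by part (a) is an automorphism $\pi$; hence $f(v) = (\pi(v), \alpha_{\pi(v)})$ for some $\alpha$, and since $\pi \in \Aut(G)$ the image of $f$ equals $G^{(\alpha)}$ as a subgraph. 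The reverse (uncolored-to-colored) reduction is the direct monotone projection that, given a graph $H$ on $n$ vertices, sets $X_{\{(v,a),(w,b)\}} = H_{\{a,b\}}$ whenever $\{v,w\} \in E(G)$; subgraph copies of $G$ in $H$ are then in bijection with satisfying color-assignments $\alpha$. Both maps are linear-size monotone projections in the sense of Section~\ref{sec:prelims2}.
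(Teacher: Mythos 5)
First, a framing remark: the paper does not prove Lemma \ref{la:cores} at all\,---\,it is stated as background with pointers to \cite{HellNesCore,HellNes} for (a), (b) and to \cite{Grohe07,li2014ac0} for (c)\,---\,so there is no in-paper argument to compare yours against. Your part (a) is the standard Hell--Ne\v{s}et\v{r}il argument (idempotent power of an endomorphism gives a retraction; core-ness forces surjectivity; a bijective endomorphism of a finite structure is an automorphism) and is correct. Part (b) has the right architecture (reduce to connected cores, bound the size of a connected core of tree-depth $k$, induct via the recursive characterization of tree-depth), but one sentence is off: in a rooted forest witnessing tree-depth, a vertex's Gaifman neighborhood is contained in the union of its \emph{ancestors and descendants}, not in ``a bounded piece of the ancestor chain''; the descendant side is a priori unbounded, which is precisely why the standard argument instead counts homomorphism-equivalence classes of rooted subtrees \emph{relative to a fixed ancestor chain} (finitely many by induction on $k$) and folds two equivalent sibling subtrees onto one another to contradict core-ness. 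Since you explicitly defer this step to Ch.~6 of \cite{nevsetril2012sparsity}, I would call this an acceptable sketch rather than a gap, but the quoted sentence as written is not the reason the count is bounded.

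Part (c) contains a genuine error in the uncolored-to-colored direction. Your projection $X_{\{(v,a),(w,b)\}} = H_{\{a,b\}}$ produces an instance on which $\SUB{G,n}$ accepts if and only if there exists \emph{some} homomorphism $G \to H$ (a satisfying $\alpha$ is exactly a homomorphism $v \mapsto \alpha_v$), whereas the uncolored problem asks for an \emph{injective} one; these are not equivalent even when $G$ is a core, so ``subgraph copies of $G$ in $H$ are in bijection with satisfying color-assignments'' is false. Concretely, take $G = C_5$ (a core) and let $H$ be the $4$-vertex graph obtained from $C_5$ by identifying two non-adjacent vertices (a triangle with one additional vertex attached to it): then $C_5 \to H$, so your reduction accepts, yet $H$ has too few vertices to contain $C_5$ as a subgraph. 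Your colored-to-uncolored direction is correct and nicely argued\,---\,projecting an injective copy of $G$ inside $G^{{\uparrow}n}$ down to $G$ gives an endomorphism, which by part (a) is an automorphism, so every copy is of the canonical form $G^{(\alpha)}$\,---\,and this is the direction that matters for transferring lower bounds. The usual route for the reverse direction is the color-coding reduction mentioned in the footnote to Proposition \ref{prop:upper-bounds}, which is an $\ACzero$ reduction but not a single monotone projection; so either supply a genuinely different construction for that direction or weaken the claim to the direction you actually established.
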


\begin{df}[Hom-Preserved Classes, Minimal Cores]\
\begin{itemize}
\item
We say that a class of structures $\scr C$ (i.e.\ a class of finite structures)
is {\em hom-preserved} if, whenever $\mf A \in \scr C$ and $\mf A \to \mf B$, we have $\mc B \in \scr C$.
\item
For a hom-preserved class $\scr C$, let $\MinStruct(\scr C)$ be the set of $\mf M \in \scr C$ with the property that for all structures $\mf A$, if $\mf A \in \scr C$ and $\mf A \to \mf M$, then $\mf M$ is isomorphic to a retract of $\mf A$.
\end{itemize}
\end{df}

The next lemma states the essential properties of $\MinStruct(\scr C)$ (see \cite{rossman2008homomorphism}).

\begin{la}\label{la:MinStruct}
The following hold for any hom-preserved class $\scr C$:
\begin{enumerate}[\quad\normalfont(a)]
\item
\label{la:easy}
$\mf A \in \scr C$ if, and only if, there exists $\mf M \in \MinStruct(\scr C)$ such that $\mf M \to \mf A$.
\item
Every structure in $\MinStruct(\scr C)$ is, indeed, a core. 
\item\label{la:iso}
Every homomorphism between structures in $\MinStruct(\scr C)$ is an isomorphism. 
\item
\label{la:qr-td}
$\scr C$ is definable (i.e.\ within the class of all finite structures) by an existential-positive sentence of quantifier-rank $k$ if, and only if, $\td(\Gaif(\mf M)) \le k$ for all $\mf M \in \MinStruct(\scr C)$.
\item
\label{la:vw-tw}
$\scr C$ is definable by an existential-positive sentence of variable-width $w$ if, and only if, $\MinStruct(\scr C)$ contains finitely many non-isomorphic structures and $\tw(\Gaif(\mf M)) \le w$ for every $\mf M \in \MinStruct(\scr C)$.
\end{enumerate}
\end{la}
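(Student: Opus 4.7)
The lemma consists of five interlocking clauses. I would prove them in the order (b), (c), (a), (d), (e), since (a) makes essential use of the core/isomorphism structure from (b)--(c), and the definability clauses (d) and (e) in turn rest on (a).

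\textbf{Clauses (b) and (c).} For (b), suppose $\mf M \in \MinStruct(\scr C)$ retracts onto a proper substructure $\mf M_0 \subsetneq \mf M$. Then $\mf M_0 \in \scr C$ by hom-preservation, and the inclusion $\mf M_0 \hookrightarrow \mf M$ is a homomorphism; the defining property of $\MinStruct(\scr C)$ at $\mf M$ would then force $\mf M$ to be isomorphic to a retract of $\mf M_0$, contradicting $|\mf M_0| < |\mf M|$. For (c), given a homomorphism $f : \mf M \to \mf M'$ between elements of $\MinStruct(\scr C)$, one invocation of the defining property (via $f$) exhibits $\mf M'$ as isomorphic to a retract $\mf M'' \subseteq \mf M$, and a second invocation (via the composed hom $\mf M' \cong \mf M'' \hookrightarrow \mf M$) exhibits $\mf M$ as isomorphic to a retract of $\mf M'$; a cardinality squeeze yields $|\mf M| = |\mf M'|$, whence $\mf M'' = \mf M$ and we obtain an explicit isomorphism $g : \mf M' \to \mf M$. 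Then $g \circ f$ is an endomorphism of the core $\mf M$, and since every endomorphism of a finite core is an automorphism (some iterate is an idempotent retraction, forced to be the identity by core-ness), $f$ is itself an isomorphism.

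\textbf{Clause (a).} For a given $\mf A \in \scr C$, I would choose $\mf M$ to be a core in $\scr C$ with $\mf M \to \mf A$ that is \emph{minimal in the hom-preorder} among all such cores. Granting existence, $\mf M$ then lies in $\MinStruct(\scr C)$: for any $\mf B \in \scr C$ with $\mf B \to \mf M$, the core $\mf N$ of $\mf B$ lies in $\scr C$ by hom-equivalence and satisfies $\mf N \to \mf M$; minimality forces $\mf M \to \mf N$, so $\mf M \bihom \mf N$ and clause (c) yields $\mf M \cong \mf N$. The inclusion $\mf N \subseteq \mf B$ composed with the canonical core retraction $\mf B \to \mf N$ then exhibits $\mf M$ as a retract of $\mf B$ up to isomorphism. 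The principal obstacle of the lemma lies precisely here: the hom-preorder on cores of $\scr C$ mapping to $\mf A$ is not obviously well-founded, since arbitrarily large cores may map to $\mf A$ (as the odd cycles $C_{2k+1} \to K_3$ illustrate). I would import the requisite Noetherian property from \cite{rossman2008homomorphism}, which ensures that any strict hom-descent within $\scr C$ eventually terminates.

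\textbf{Clauses (d) and (e).} The pivotal external ingredient is a classical normal form for EP sentences: every existential-positive sentence of quantifier-rank $k$ is equivalent on finite structures to a finite disjunction $\bigvee_i Q_{\mf M_i}$ of canonical conjunctive queries $Q_{\mf M} := \exists \vec x\,\bigwedge(\text{atomic diagram of } \mf M)$, where each $\mf M_i$ satisfies $\td(\Gaif(\mf M_i)) \le k$; conversely, if $\td(\Gaif(\mf M)) \le k$, then $Q_{\mf M}$ admits a rewriting of quantifier-rank $k$ obtained by quantifying along an elimination forest of $\Gaif(\mf M)$. (The analogue with variable-width in place of quantifier-rank and $\tw$ in place of $\td$ underlies (e).) Granting this normal form, the ``if'' direction of (d) combines clause (a) with Lemma~\ref{la:cores}(\ref{la:finitely-many}) to write $\scr C$ as the finite disjunction $\bigvee_{\mf M \in \MinStruct(\scr C)} Q_{\mf M}$ of quantifier-rank $\le k$. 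For the ``only if'' direction, if $\scr C$ is defined by $\Psi \equiv \bigvee_i Q_{\mf M_i}$ with $\td(\Gaif(\mf M_i)) \le k$, then every $\mf M \in \MinStruct(\scr C)$ satisfies some $Q_{\mf M_i}$, giving $\mf M_i \to \mf M$; the defining property of $\MinStruct$ embeds $\mf M$ as a retract of $\mf M_i$, so $\Gaif(\mf M)$ is a subgraph of $\Gaif(\mf M_i)$ up to isomorphism and thus inherits the tree-depth bound by minor-monotonicity. Clause (e) proceeds in parallel with $\tw$ replacing $\td$, while the additional finiteness of $\MinStruct(\scr C)$ up to isomorphism follows because every $\mf M \in \MinStruct(\scr C)$ is isomorphic to a substructure of some $\mf M_i$, a condition satisfied by only finitely many cores.
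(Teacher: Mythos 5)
The paper itself does not prove this lemma: parts (a)--(c) and (e) are deferred to \cite{rossman2008homomorphism}, and only part (d) is sketched, by essentially your argument (primitive-positive normal form with canonical structures of tree-depth at most the quantifier rank, elimination forests for the converse, and retract-monotonicity of $\td$ to bound the structures in $\MinStruct(\scr C)$). Your clauses (b) and (c) are correct, and (d), (e) are fine \emph{granted} clause (a). One small slip: inside your proof of (a) you invoke clause (c) to conclude $\mf M \cong \mf N$ from $\mf M \bihom \mf N$, but at that point neither structure is yet known to lie in $\MinStruct(\scr C)$; the correct citation is Lemma \ref{la:cores}(a), uniqueness of the core within a $\bihom$-class.

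The genuine gap is in clause (a), at exactly the step you flag and then defer. The hom-preorder on cores of $\scr C$ that map to $\mf A$ is \emph{not} well-founded, and there is no general ``Noetherian property'' in \cite{rossman2008homomorphism} to import for an arbitrary hom-preserved class. Take $\scr C$ to be the class of non-bipartite graphs (hom-preserved; the paper uses it in Section \ref{sec:conclusion}) and $\mf A = K_3$: the cores in $\scr C$ mapping to $K_3$ include every odd cycle, and $\cdots \to C_7 \to C_5 \to C_3$ is an infinite strictly descending hom-chain with no minimal element. Indeed, for this $\scr C$ one checks directly that $\MinStruct(\scr C) = \emptyset$ (a structure of odd girth $g$ receives homomorphisms from $C_m$ for every odd $m \ge g$, yet cannot be isomorphic to a retract of all of them), so clause (a) is actually \emph{false} under the blanket hypothesis ``any hom-preserved class''. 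The clause is true\,---\,and is what the paper actually uses\,---\,when $\scr C$ is additionally the class of finite models of a first-order sentence, but there it rests on the qualitative finite Homomorphism Preservation Theorem of \cite{rossman2008homomorphism}: once $\scr C$ is known to be existential-positive definable, it is generated under homomorphisms by the finitely many canonical structures of a defining sentence, and $\MinStruct(\scr C)$ consists of the hom-minimal cores among them, a finite and hence well-founded minimization. Your proof of (a) therefore cannot be completed by a minimization argument alone; it needs either that external input or the substructure-minimal notion of ``minimal model'' from \cite{rossman2008homomorphism}, whose underlying order is well-founded by cardinality.
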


Since Lemma \ref{la:MinStruct}(\ref{la:qr-td}) in particular plays a key role in the next section, we briefly sketch the proof. In one direction: Suppose $\scr C$ is defined by an existential-positive sentence $\Phi$ of quantifier-rank $k$. It is easy to show (by a syntactic argument) that $\Phi$ is equivalent to a disjunction $\Psi_1 \vee \dots \vee \Psi_t$ of {\em primitive-positive sentences} $\Psi_i$ (i.e.\ existential-positive sentences that involve conjunctions $\wedge$ but no disjunctions $\vee$), each with quantifier-rank at most $k$. For each $\Psi_i$, there is a corresponding structure $\mc A_i$ with the property that $\mc B \models \Psi_i \,\Leftrightarrow\, \mc A_i \to \mc B$ and moreover the tree-depth of $\mc A_i$ is at most the quantifier-rank of $\Psi_i$ (and hence at most $k$). Thus, $\scr C$ is generated by $\mc A_1,\dots,\mc A_t$ and hence $\MinStruct(\scr C)$ consists of finitely many cores, each of tree-depth at most $k$ (coming from the minimal elements among $\mc A_1,\dots,\mc A_t$ in the homomorphism order). 

For the reverse direction: Start with the assumption that all structures in $\MinStruct(\scr C)$ have tree-depth at most $k$. By Lemma \ref{la:cores}(\ref{la:finitely-many}), $\MinStruct(\scr C)$ contains finitely many non-isomorphic structures $\mc M_1,\dots,\mc M_t$. For each $\mc M_i$, let $\Psi_i$ be the corresponding primitive-positive sentence of quantifier-rank at most $k$. Then $\scr C$ is defined by the existential-positive sentence $\Psi_1 \vee \dots \vee \Psi_t$.

\section{Proof of Theorem \ref{thm:improved}}\label{sec:improved}

In this section, we finally prove our main result, the ``Poly-rank'' Homomorphism Preservation Theorem on Finite Structures (Theorem \ref{thm:improved}, stated in Section \ref{sec:preservation-theorems}). We begin in Section \ref{sec:first-bound} by proving a weaker version of the result with an exponential upper bound $\beta(k) \le 2^{O(k)}$. In Section \ref{sec:extension}, we describe the improvement to $\beta(k) \le k^{O(1)}$, which involves new results from circuit complexity and graph minor theory.

\subsection{Preliminary Bound: $\beta(k) \le 2^{O(k)}$}\label{sec:first-bound}

For simplicity sake, we will assume that $\sigma$ consists of binary relations only. At the end of this subsection, we explain how to extend the argument to arbitrary $\sigma$.

Let $\Phi$ be a first-order sentence of quantifier-rank $k$, let $\scr C$ be the set of finite models of $\Phi$, and assume that $\scr C$ is hom-preserved (that is, $\Phi$ is preserved under homomorphisms on finite structures). Our goal is to show that $\Phi$ is equivalent to an existential-positive sentence of quantifier-rank $2^{O(k)}$. By Lemma \ref{la:MinStruct}(\ref{la:qr-td}), it suffices to show that $\td(\Gaif(\mf M)) \le 2^{O(k)}$ for all $\mf M \in \MinStruct(\scr C)$. 

Consider any $\mf M \in \MinStruct(\scr C)$. Let $G$ be the Gaifman graph of $\mf M$, and let $m$ be the size of the universe of $\mf M$. (Note that $m = |V(G)|$.) The following claim is key to showing $\td(G) \le 2^{O(k)}$.

\begin{claim}\label{claim:proj}
For all $n \in \N$, there exists a monotone-projection reduction $\SUB{G,n} \lemp \MODEL_{\Phi,mn}$.
\end{claim}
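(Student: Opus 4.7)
\medskip

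\noindent\textbf{Proposal.} The plan is to build, from any input $X \in \{0,1\}^{E(G^{\uparrow n})}$, a $\sigma$-structure $\mf A_X$ on universe $V(G)\times[n]$ (which we identify with $[mn]$) so that $\mf A_X \models \Phi$ if and only if some $\alpha \in [n]^{V(G)}$ satisfies $G^{(\alpha)} \subseteq X$, while the map $X \mapsto \mr{Enc}(\mf A_X)$ is a coordinatewise monotone projection. For each binary $R \in \sigma$ and each pair $((v,a),(w,b))$, I set the corresponding bit of $\mr{Enc}(\mf A_X)$ equal to
\begin{itemize}
\item the variable $X_{\{(v,a),(w,b)\}}$, if $(v,w) \in R^{\mf M}$ and $v \ne w$ (so $\{v,w\} \in E(G)$);
\item the constant $1$, if $(v,w) \in R^{\mf M}$, $v=w$, and $a=b$;
\item the constant $0$, otherwise.
\end{itemize}
This is manifestly a monotone projection from $\SUB{G,n}$ to $\MODEL_{\Phi,mn}$. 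The two key structural properties of $\mf A_X$ that make the construction work are: the fibre projection $\pi : V(G)\times[n] \to V(G)$, $\pi(v,a)=v$, is always a homomorphism $\mf A_X \to \mf M$ (true by construction, since every tuple put into $\mf A_X$ lies above a tuple of $\mf M$); and any ``transversal'' $\alpha \in [n]^{V(G)}$ with $G^{(\alpha)} \subseteq X$ yields a homomorphism $v \mapsto (v,\alpha_v)$ from $\mf M$ into $\mf A_X$ (the edges of $G$ force the off-diagonal tuples, and the diagonal tuples are present by fiat).

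For the easy ($\Leftarrow$) direction, an $\alpha$ with $G^{(\alpha)}\subseteq X$ gives the above homomorphism $\mf M \to \mf A_X$; since $\mf M \in \scr C$ and $\scr C$ is hom-preserved, $\mf A_X \in \scr C$, i.e.\ $\mf A_X \models \Phi$. For the harder ($\Rightarrow$) direction, assuming $\mf A_X \in \scr C$, Lemma~\ref{la:MinStruct}(\ref{la:easy}) produces $\mf M' \in \MinStruct(\scr C)$ with $\mf M' \to \mf A_X$. Composing with $\pi$ gives $\mf M' \to \mf M$, and Lemma~\ref{la:MinStruct}(\ref{la:iso}) then forces this composite to be an isomorphism; so without loss of generality $\mf M'=\mf M$, and we obtain $h : \mf M \to \mf A_X$ with $\pi \circ h \in \Aut(\mf M)$. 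Post-composing with $(\pi \circ h)^{-1}$ gives a homomorphism $h' : \mf M \to \mf A_X$ that is a \emph{section} of $\pi$, i.e.\ $h'(v) = (v,\alpha_v)$ for some $\alpha \in [n]^{V(G)}$. For any edge $\{v,w\}\in E(G)$, pick $R$ and an orientation with $(v,w) \in R^{\mf M}$; then $(h'(v),h'(w)) \in R^{\mf A_X}$, and the construction forces $X_{\{(v,\alpha_v),(w,\alpha_w)\}}=1$. Hence $G^{(\alpha)} \subseteq X$.

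The only step that requires genuine input beyond bookkeeping is the use of $\mf M \in \MinStruct(\scr C)$ together with Lemma~\ref{la:MinStruct}(\ref{la:iso}) to extract a \emph{section} of $\pi$ (rather than merely some homomorphism $\mf M \to \mf A_X$); this is what keeps the reduction from producing spurious ``yes'' instances, and it is the one place where minimality of $\mf M$ is genuinely used. Extending to signatures with higher-arity relations $R^{(t)}$ is routine: for each $t$-tuple $\vec c = ((v_1,a_1),\ldots,(v_t,a_t))$, one sets the bit to $0$ unless $(v_1,\ldots,v_t) \in R^{\mf M}$, in which case one sets it to $1$ if $\vec c$ is ``diagonal'' on each repeated coordinate of $\vec v$ and otherwise to the unique variable of $\SUB{G,n}$ associated with the edges of the tuple (any single edge-variable forced by the tuple works, since all such edges must be present in $G^{(\alpha)}$); the rest of the argument is unchanged.
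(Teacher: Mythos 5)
Your construction and correctness argument in the binary case are essentially identical to the paper's proof: the same projection $\rho$ (the paper sets the diagonal bits $Y_{(R,(v,a),(v,b))}$ with $(v,v)\in R^{\mf M}$ to $1$ for all $a,b$ rather than only for $a=b$, but either choice works), the same use of hom-preservation of $\scr C$ for the easy direction, and the same use of Lemma \ref{la:MinStruct}(\ref{la:easy}) and (\ref{la:iso}) to upgrade a homomorphism $\mf M'\to\rho^\ast(X)$ to a section of the fibre projection $\pi$. (Trivial slip: you want $h'=h\circ(\pi\circ h)^{-1}$, i.e.\ \emph{pre}-composition of $h$ with the inverse automorphism, so that $\pi\circ h'=\id$.) Since the claim is stated under the standing assumption that $\sigma$ is binary, this part is fine.

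The paragraph extending the construction to relations of arity $t\ge 3$ is wrong, and the error is not cosmetic. Sending the bit of a tuple $\vec c=((v_1,a_1),\dots,(v_t,a_t))$ with $(v_1,\dots,v_t)\in R^{\mf M}$ to a \emph{single} edge-variable only supports the easy direction (where $G^{(\alpha)}\subseteq X$ indeed makes every induced edge-variable equal to $1$). In the forward direction your own argument breaks: from $(h'(v_1),\dots,h'(v_t))\in R^{\mf A_X}$ you may conclude only that the one chosen variable equals $1$, not that $X_{\{(v_i,\alpha_{v_i}),(v_j,\alpha_{v_j})\}}=1$ for \emph{every} pair with $v_i\ne v_j$, so $G^{(\alpha)}\subseteq X$ does not follow. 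Concretely, if $R$ is ternary, $(u,v,w)\in R^{\mf M}$ with $u,v,w$ distinct, and the chosen representative is $X_{\{(u,a),(v,b)\}}$, then an input $X$ containing all edges between the $u$- and $v$-fibres and nothing else puts every tuple $((u,a),(v,b),(w,c))$ into $R^{\mf A_X}$, and $v\mapsto(v,1)$ can be a homomorphism $\mf M\to\mf A_X$ even though $\SUB{G,n}(X)=0$. There is no fix within monotone projections here, since a projection assigns each $Y$-bit at most one $X$-variable while correctness requires the conjunction of all $\binom{t}{2}$ induced edge-variables. This is exactly why the paper restricts Claim \ref{claim:proj} to binary $\sigma$ and, for general signatures, replaces each $Y$-variable by the conjunction $\bigwedge_{i<j:\,v_i\ne v_j}X_{\{(v_i,a_i),(v_j,a_j)\}}$, explicitly conceding that the resulting reduction is no longer a monotone projection but is computable by one extra layer of constant fan-in AND gates (costing $+1$ in depth and a constant factor in size), which still suffices for the downstream formula-size argument.
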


In order to define this monotone-projection reduction, let us identify $[mn]$ with the set $V(G^{{\uparrow}n})$ ($=V(G) \times [n]$). Variables $X_e$ of $\SUB{G,n}$ are indexed by potential edges $e \in E(G^{{\uparrow}n})$ in a subgraph $X \subseteq G^{{\uparrow}n}$. Variables $Y_i$ of $\MODEL_{\Phi,mn}$ are indexed by the set
\[
  I \defeq \big\{(R,(v,a),(w,b)) : R^{(2)} \in \sigma,\, (v,a),(w,b) \in V(G^{{\uparrow}n})\big\}.
\]
(That is, $I$ is the set of potential $2$-tuples of relations of structures with universe $V(G^{{\uparrow}n})$.) Define the monotone projection $\rho : \{Y_i\}_{i \in I} \to \{X_e\}_{\smash{e \in E(G^{{\uparrow}n})}} \cup \{0,1\}$ by
\[
  \rho : Y_{(R,(v,a),(w,b))} \mapsto
  \begin{cases}
    X_{\{(v,a),(w,b)\}} &\text{if } (v,w) \in R^{\mf M} \text{ and } v \ne w,\\
    1 &\text{if } (v,w) \in R^{\mf M} \text{ and } v = w,\\
    0 &\text{otherwise}.
  \end{cases}
\]
We must show that the corresponding map 
\[
  \rho^\ast : \{\text{subgraphs of }G^{{\uparrow}n}\} \to \{\text{structures with universe }V(G^{{\uparrow}n})\}
\]
is in fact a reduction from $\SUB{G,n}$ to $\MODEL_{\Phi,mn}$. That is, we must show that for any $X \subseteq G^{{\uparrow}n}$,
\begin{equation}\label{eq:to-show}
  \SUB{G,n}(X) = 1 \,\Longleftrightarrow\, \MODEL_{\Phi,mn}(\rho^\ast(X)) = 1.
\end{equation}

For the $\Longrightarrow$ direction of (\ref{eq:to-show}): Assume $\SUB{G,n}(X) = 1$. Then $G^{(\alpha)} \subseteq X$ for some $\alpha \in [n]^{V(G)}$. The definition of $\rho$ ensures that the map $v \mapsto (v,\alpha_v)$ is a homomorphism from $\mf M$ to the structure $\rho^\ast(X)$.\footnote{To see why, suppose we have $(v,w) \in R^{\mf M}$ for some $R^{(2)} \in \sigma$. We must show that $((v,\alpha_v),(w,\alpha_w)) \in R^{\rho^\ast(X)}$. First, consider the case that $v \ne w$. The assumption $G^{(\alpha)} \subseteq X$ implies that $\{(v,\alpha_v),(w,\alpha_w)\} \in E(X)$. Since $\rho$ maps the variable $Y_{(R,(v,\alpha_v),(w,\alpha_w))}$ to the variable $X_{\{(v,\alpha_v),(w,\alpha_w)\}}$ (which has value $1$ for $X$), it follows that $((v,\alpha_v),(w,\alpha_w)) \in R^{\rho^\ast(X)}$. Finally, consider the case that $v = w$. In this case, $\rho$ maps the variable $Y_{(R,(v,\alpha_v),(w,\alpha_w))}$ to the constant $1$. So again we have $((v,\alpha_v),(w,\alpha_w)) \in R^{\rho^\ast(X)}$.} Since $\scr C$ is hom-preserved, it follows that $\rho^\ast(X) \in \scr C$ and therefore $\MODEL_{\Phi,mn}(\rho^\ast(X)) = 1$.

For the $\Longleftarrow$ direction of (\ref{eq:to-show}): Assume $\MODEL_{\Phi,mn}(\rho^\ast(X)) = 1$, that is, $\rho^\ast(X) \in \scr C$. By Lemma \ref{la:MinStruct}(\ref{la:easy}) there exist $\mf N \in \MinStruct(\scr C)$ and a homomorphism $\gamma : \mf N \to \rho^\ast(X)$. The definition of $\rho$ ensures that the map $\pi : (v,i) \mapsto v$ is a homomorphism from $\rho^\ast(X)$ to $\mf M$.\footnote{In fact, this holds for every $X \subseteq G^{{\uparrow}n}$ independent of the assumption that $\MODEL_{\Phi,mn}(\rho^\ast(X)) = 1$. This follows from the observation that $\pi$ is (in particular) a homomorphism from $\rho^\ast(G^{{\uparrow}n})$ to $\mf M$. To see why, consider any $((v,a),(w,b)) \in R^{\rho^\ast(G^{{\uparrow}n})}$ (corresponding to $Y_{(R,(v,a),(w,b))} = 1$). It must be the case that $(v,w) \in R^{\mf M}$, since the contrary assumption $(v,w) \notin R^{\mf M}$ would mean that $\rho$ maps the variable $Y_{(R,(v,a),(w,b))}$ to $0$.} The composition $\pi \circ \gamma$ is a homomorphism from $\mf N$ to $\mf M$. By Lemma \ref{la:MinStruct}(\ref{la:iso}), it is an isomorphism. Therefore, without loss of generality, we may assume that $\mf M = \mf N$ and $\pi \circ \gamma$ is the identity map on the universe $V(G)$ of $\mf M$. This means that $\pi(v) \in \{(v,a) : a \in [n]\}$ for all $v \in V(G)$. We may now define $\alpha \in [n]^{V(G)}$ as the unique element such that $\gamma : v \mapsto (v,\alpha_v)$ for all $v \in V(G)$. From the definition of $\rho$ and the fact that $G = \Gaif(\mf M)$, we infer that $G^{(\alpha)} \subseteq X$.\footnote{Consider an edge $\{(v,\alpha_v),(w,\alpha_v)\} \in E(G^{(\alpha)})$. By definition of $G^{(\alpha)}$, we have $\{v,w\} \in E(G)$. Since $G = \Gaif(\mf M)$, there exists a relation $R^{(2)} \in \sigma$ such that $(v,w) \in R^{\mf M}$ or $(w,v) \in R^{\mf M}$. Without loss of generality, assume $(v,w) \in R^{\mf M}$. Since $\gamma : \mf M \to \rho^\ast(X)$ is a homomorphism, we have $(\gamma(v),\gamma(w)) = ((v,\alpha_v),(w,\alpha_w)) \in R^{\rho^\ast(X)}$. Since $(v,w) \in R^{\mf M}$ and $v \ne w$, the monotone projection $\rho$ maps $Y_{(R,(v,\alpha_v),(w,\alpha_w))}$ to $X_{\{(v,\alpha_v),(w,\alpha_w)\}}$. It follows that $\{(v,\alpha_v),(w,\alpha_w)\} \in E(X)$. Therefore, $G^{(\alpha)}$ is a subgraph of $X$.} We conclude that $\SUB{G,n}(X) = 1$, finishing the proof of Claim \ref{claim:proj}.
\bigskip

We proceed to show that $\td(G) \le 2^{O(k)}$.  By Corollary \ref{cor:path-sub}, we have $\SUB{P_{\td(G)},n} \lemp \SUB{G,n}$. By Claim \ref{claim:proj} and transitivity of $\lemp$, it follows that $\SUB{P_{\td(G)},n} \lemp \MODEL_{\Phi,kn}$. Therefore, $\mu(\SUB{P_{\td(G)},n}) \le \mu(\MODEL_{\Phi,kn})$ for every standard complexity measure $\mu : \{$Boolean functions$\} \to \N$ (in particular, depth-$k$ formula size). By Lemma \ref{la:FO-AC0} (the simulation of first-order logic by $\ACzero$), there exist depth-$k$ formulas of size $O((mn)^k)$ which compute $\MODEL_{\Phi,mn}$. Therefore, there exist depth-$k$ formulas of size $O((mn)^k)$ which compute $\SUB{P_{\td(G)},n}$. On the other hand, by Theorem~\ref{thm:Pk}, the depth-$k$ formula size of $\SUB{P_{\td(G)},n}$ is $n^{\Omega(\log\td(G))}$ for all sufficiently large $n$ such that $k < \log\log n$. Therefore, we have $n^{\Omega(\log\td(G))} \le O((mn)^k)$ for all sufficiently large $n$. Since $m$ ($= |V(G)|$) is constant, it follows that $k \ge \Omega(\log\td(G))$, that is, $\td(G) \le 2^{O(k)}$. This completes the proof that $\beta(k) \le 2^{O(k)}$ for binary signatures $\sigma$.

\begin{rmk}
In this argument, as an alternative to {\em depth-$k$ formula size}, we may instead consider {\em fan-in $O(n)$ depth} (i.e.\ {\em fan-in $cn$ depth} for a sufficiently large constant $c$) and appeal to Corollary \ref{cor:fanin-depth} instead of Theorem \ref{thm:Pk}.
\end{rmk}

Finally, we explain how to adapt the above argument when $\sigma$ is an arbitrary finite relational signature. Here the variables of $\MODEL_{\Phi,kn}$ are indexed by the set
\[
  \big\{(R,(v_1,a_1),\dots,(v_t,a_t)) : R^{(t)} \in \sigma,\, (v_1,a_1),\dots,(v_t,a_t) \in V(G) \times [n]\big\}
\]
and the reduction $\{\text{subgraphs of }G^{{\uparrow}n}\} \to \{\text{structures with universe }V(G^{{\uparrow}n})\}$ is defined by
\[
  Y_{(R,(v_1,a_1),\dots,(v_t,a_t))} \mapsto
  \begin{cases}
    \bigwedge_{1 \le i < j \le t \,:\, v_i \ne v_j} 
    X_{\{(v_i,a_i),(v_j,a_j)\}}
    &\text{if } (v_1,\dots,v_t) \in R^{\mf M},\\
    0 &\text{otherwise}.
  \end{cases}
\]
(By convention, $\bigwedge_{i \in \emptyset} X_i = 1$.) Note that this reduction is not a monotone projection, as we are mapping each $Y$-variable to a conjunction of $X$-variables. This reduction is, however, computed by a single layer of constant fan-in AND gates. Therefore, under this reduction, any Boolean formula computing $\MODEL_{\Phi,kn}$ is converted to a Boolean formula computing $\SUB{G,n}$ with an increase of $1$ in depth and a constant factor increase in size. Other than this change, the rest of the argument is identical to the case of binary signatures.

\subsection{Improvement to $\beta(k) \le k^{O(1)}$}\label{sec:extension}

The upper bound $\beta(k) \le 2^{O(k)}$ in the previous section relies on the {\em exponential approximation} of tree-depth in terms of the longest path, that is, $\log(\lp(G)+1) \le \td(G) \le \lp(G)$ (inequality (\ref{eq:lptd})). To achieve a polynomial upper bound on $\beta(k)$, we require a {\em polynomial approximation} of tree-depth in terms of a few manageable classes ``excluded minors''. This realization led to a conjecture of the author, which was soon proved in joint work with Ken-ichi Kawarabayashi \cite{RossmanTREEDEPTH}.

\begin{thm}\label{thm:treedepth}
Every graph $G$ of tree-depth $k$ satisfies one (or more) of the following conditions for $\ell = \wt\Omega(k^{1/5})$:
\begin{enumerate}[\ \ \normalfont(i)]
  \item
    $\tw(G) \ge \ell$,
  \item
    $G$ contains a path of length $2^\ell$, or
  \item
    $G$ contains a $B_\ell$-minor.
\end{enumerate}
\end{thm}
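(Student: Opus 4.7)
The plan is to prove the contrapositive: supposing $\tw(G) < \ell$, $\lp(G) < 2^\ell$, and $G$ contains no $B_\ell$-minor, I would show $\td(G) \le \wt O(\ell^5)$. Rearranging this bound at $k = \td(G)$ yields $\ell = \wt\Omega(k^{1/5})$, so at least one of (i), (ii), (iii) must hold at $\ell = \wt\Omega(k^{1/5})$.

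First I would fix a tree decomposition $(T,\mc W)$ of $G$ of width $\le \ell - 1$ (which exists because (i) fails), normalized so that it is \emph{reduced}: no bag strictly contains another, and every bag-intersection $W_t \cap W_{t'}$ along an edge of $T$ is a minimal separator of $G$. The heart of the argument is a pair of transfer lemmas relating the combinatorial structure of the decomposition tree $T$ to that of $G$. The \emph{path transfer}: if $T$ contains a path of length $P$, then selecting representative vertices from successive bags along this path and splicing in short detours within each bag yields a path in $G$ of length $\Omega(P/\mr{poly}(\ell))$. The \emph{minor transfer}: if $T$ contains $B_b$ as a minor, then inflating each branch set $S \subseteq V(T)$ to $\bigcup_{t \in S} W_t \subseteq V(G)$, followed by local adjustments to disambiguate overlapping bags, produces a $B_{b'}$-minor of $G$ with $b' = \Omega(b/\mr{poly}(\ell))$. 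The $\mr{poly}(\ell)$ slippage in each case reflects the fact that every $\ell$-sized bag can support only $\ell$ independent branches or detours.

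Given the hypotheses on $G$, the two transfer lemmas force $\lp(T) \le 2^\ell \cdot \mr{poly}(\ell)$ and $\beta(T) \le \ell \cdot \mr{poly}(\ell)$, where $\beta(T)$ denotes the largest $b$ with $B_b$ a minor of $T$. Since $T$ is a tree, a centroid-style recursion---remove the midpoint $v$ of a longest path of $T$, so that every component of $T - v$ has longest path at most $\lp(T)/2$ and $\beta$-value at most $\beta(T)$, then induct---yields the tree inequality $\td(T) \le \beta(T) \cdot \lceil \log(\lp(T)+1) \rceil = \wt O(\ell^4)$. Finally, the bag-wise elimination bound $\td(G) \le (\tw(G)+1) \cdot \td(T)$, obtained by following an optimal elimination forest of $T$ and eliminating the vertices of each bag in a consecutive block, combines to give $\td(G) \le \ell \cdot \wt O(\ell^4) = \wt O(\ell^5)$, as required.

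The main obstacle will be the $B_b$-minor transfer lemma. A $B_b$-minor of $T$ is witnessed by $2^b - 1$ pairwise vertex-disjoint connected subtrees of $T$ arranged in the required binary-tree adjacency pattern, but the naive inflation $S \mapsto \bigcup_{t \in S} W_t$ is neither automatically vertex-disjoint in $G$ (neighbouring bags of $T$ share vertices through their separators) nor automatically connected as a subgraph of $G$ (adjacency in $T$ need not correspond to $G$-edges). Refining the decomposition so that each tree-branch of $T$ corresponds to a connected, vertex-disjoint chunk of $G$---at a controlled $\mr{poly}(\ell)$ cost---is the delicate step whose precise accounting ultimately governs the $\ell^5$ exponent in the final bound; a lossless version of the transfer would already give $\wt O(\ell^3)$, and it is the inevitable bag-sized inflation that pushes the exponent up to $5$.
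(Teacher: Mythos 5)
The paper does not actually prove Theorem \ref{thm:treedepth}; it imports the statement from the Kawarabayashi--Rossman manuscript \cite{RossmanTREEDEPTH}, so there is no in-paper proof to compare your argument against, and I can only assess it on its own terms. On those terms it has a fatal gap: both transfer lemmas are false. The shape of the tree $T$ of a tree decomposition is simply not controlled by the shape of $G$, and your ``reduced'' normalization does not change this. Concretely, let $G = K_{1,n}$ with centre $c$ and leaves $v_1,\dots,v_n$, and take the width-$1$ decomposition with bags $W_{t_i}=\{c,v_i\}$, one per node of $T$. If $T$ is a path $t_1\cdots t_n$, this is a valid tree decomposition (each $v_i$ lies in one bag, $c$ lies in all bags, which form a connected subtree), no bag contains another, and every adjacent intersection $\{c\}$ is a minimal separator; yet $\lp(T)$ is unbounded while $\lp(G)=2$, so the path transfer fails by an arbitrarily large factor, not a $\mr{poly}(\ell)$ one. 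Taking $T=B_{\log n}$ with the same bags kills the minor transfer identically: $T$ has a $B_{\log n}$-minor while $G$ has no $B_3$-minor. Adjacency in $T$ only guarantees that consecutive bags share separator vertices, and those shared vertices can all be the \emph{same} vertex of $G$, so no long path or large binary tree can be pulled back into $G$. Repairing this would require either choosing a decomposition whose tree is already well-shaped (which, given your bound $\td(G)\le(\tw(G)+1)\cdot\td(T)$, is essentially the theorem you are trying to prove) or locating the obstructions directly inside $G$; the latter is what \cite{RossmanTREEDEPTH} actually does, at considerable length.

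A second, independent problem is the ``centroid-style recursion'' for the tree inequality. The claim that deleting the midpoint $v$ of a longest path leaves components with longest path at most $\lp(T)/2$ is false (attach a path of length $L-2$ to the vertex just past the midpoint of a path of length $2L$: after deleting the midpoint, one component still contains a path of length $2L-2$), and even if it were true, a recursion that removes one vertex per halving step would prove $\td(T)\le\log_2\lp(T)+O(1)$ for every tree, contradicting $\td(B_b)=b$ with $\lp(B_b)=2b-1$; the factor $\beta(T)$ never actually enters your recursion. The inequality $\td(T)=O(\beta(T)\cdot\log(\lp(T)+2))$ you are aiming for is in fact true, but it needs a different argument: root $T$, observe that the vertices whose subtrees attain the maximal Horton--Strahler number form a root-originating path $P$ (no vertex has two children of that maximal Strahler number, else the number would increase), eliminate $P$ with its balanced elimination tree of height $\lceil\log(|P|+1)\rceil$, recurse on the hanging subtrees whose Strahler number has dropped by one, and finally note that Strahler number $s$ forces a $B_s$ topological minor. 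Since the transfer lemmas are the load-bearing step and they fail outright, the proposal does not establish the theorem.
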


This result is analogous to the Polynomial Grid-Minor Theorem \cite{chekuri2014polynomial}, which can be used to replace condition (i) with the condition that $G$ contains an $\Omega(k^\eps) \times \Omega(k^\eps)$ grid minor for an absolute constant $\eps > 0$. In cases (i) and (ii), Theorems \ref{thm:LRR} and \ref{thm:Pk} respectively imply that $\SUB{G}$ has $\ACzero$ formula size $n^{\wt\Omega(\td(G)^{1/5})}$. This leaves only case (iii), where forthcoming work of the author \cite{RossmanSUBGRAPH} shows the following (via a generalization of the ``pathset complexity'' framework of \cite{rossman2014formulas}).

\begin{thm}\label{thm:Bk}
The $\ACzero$ formula size of $\SUB{B_k}$ is $n^{\Omega(k^\eps)}$ for an absolute constant $\eps > 0$.
\end{thm}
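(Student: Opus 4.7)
The plan is to generalize the \emph{pathset complexity} framework of \cite{rossman2014formulas}, which established the $n^{\Omega(\log k)}$ lower bound on the $\ACzero$ formula size of $\SUB{P_k}$. That framework has three ingredients: (1) a hard input distribution on subgraphs of $P_k^{{\uparrow}n}$ against which formula lower bounds are proved; (2) a combinatorial object — a \emph{pathset} $\mc F$, i.e.\ a collection of partial embeddings of a subpath of $P_k$ into the input — assigned to each subformula, capturing which partial solutions that subformula "tracks" under the distribution; and (3) a complexity measure $\mu(\mc F)$ that is subadditive under union (to handle OR gates) and controlled under a join operation $\mc F \bowtie \mc G$ (to handle AND gates), together with the fact that the pathset realizing the canonical yes-instance has $\mu$-complexity at least $n^{\Omega(\log k)}$.

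For $G = B_k$ the plan is to replace pathsets by \emph{treesets}: collections of partial embeddings of \emph{rooted subtrees} of $B_k$ into subgraphs of $B_k^{{\uparrow}n}$. The first step is to isolate the right family of subtrees closed under the union and join operations that arise from formula gates; a natural choice is the family of connected subtrees of $B_k$ containing the root, with joins corresponding to gluing two such subtrees along a common initial subtree. The second step is to define $\mu(\cdot)$ via a recursive decomposition of $B_k$ that exploits its self-similarity: $B_k$ decomposes as a root joined to two copies of $B_{k-1}$, and the complexity measure should recurse through these decompositions so that the analysis of a single join essentially reduces to two smaller join problems in parallel.

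The technical heart of the proof will be establishing two properties of $\mu$: \emph{subadditivity} $\mu(\mc F \cup \mc G) \le \mu(\mc F) + \mu(\mc G)$, which should be immediate from the definition; and a \emph{join inequality} of the form $\mu(\mc F \bowtie \mc G) \le \mu(\mc F) \cdot \mu(\mc G) \cdot n^{-\Omega(k^\eps)}$, encoding the "cost" each AND gate must pay. With these in hand, a standard induction on formula structure bounds the complexity of the treeset computed by any depth-$d$, size-$s$ formula for $\SUB{B_k,n}$ by roughly $s \cdot n^{O(d)}$, while a direct calculation on the trivial "all-embeddings" treeset provides a matching lower bound of $n^{\Omega(k^\eps)}$; combining these gives the theorem.

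The main obstacle will be the join inequality. In the path case, the intersection of two partial-path embeddings is itself a (shorter) path, so the combinatorics of overlap is essentially one-dimensional and the cost per join is cleanly $n^{-\Omega(1)}$ relative to path length; this is what yields the clean $\log k$ exponent. For binary trees the intersection of two partial-subtree embeddings can have an intricate branching structure, and one must parametrize this overlap (likely via the profile of leaves where the two subtrees diverge) and argue probabilistically — using an entropic or "supersaturation" argument tuned to the branching of $B_k$ — that the loss per join remains polynomial in the relevant scale. It is precisely in the slack of this probabilistic analysis that the exponent $\log k$ degrades to $k^\eps$ for some small absolute $\eps > 0$, and this is where a genuinely new combinatorial argument beyond \cite{rossman2014formulas} is required.
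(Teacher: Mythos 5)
First, a point of reference: this paper does not itself prove Theorem \ref{thm:Bk}. The result is imported as a black box from the forthcoming manuscript \cite{RossmanSUBGRAPH}, with only the parenthetical indication that the proof proceeds ``via a generalization of the pathset complexity framework of \cite{rossman2014formulas}.'' Your proposal is consistent with that one-line description of the strategy: replace partial embeddings of subpaths of $P_k$ by partial embeddings of rooted subtrees of $B_k$, equip such families with a complexity measure that is subadditive under union and controlled under join, and play the resulting upper bound for small formulas against a lower bound for the canonical family of embeddings. At the level of architecture, you have correctly identified the intended route.

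As a proof, however, there is a genuine gap, and it is exactly the one you concede in your final sentence: the join inequality with loss $n^{-\Omega(k^{\eps})}$ (equivalently, the $n^{\Omega(k^{\eps})}$ lower bound on the measure of the full treeset) \emph{is} the theorem, and you do not establish it. Everything else in the outline\,---\,the choice of root-containing subtrees as patterns, subadditivity under union, the induction over formula structure\,---\,is the routine part that transfers from \cite{rossman2014formulas} essentially verbatim. What does not transfer is the quantitative control of the join: in the path case the overlap of two partial paths is an interval, and the restriction/density lemmas of \cite{rossman2014formulas} exploit this one-dimensionality in an essential way; for $B_k$ the overlap of two rooted subtrees branches, and you give no argument that the per-join loss does not degrade to something that again yields only $n^{\Omega(\log k)}$. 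That weaker bound is already available for free, since $P_k \preceq B_k$ combined with Proposition \ref{prop:sub-sub} and Theorem \ref{thm:Pk} gives $\ACzero$ formula size $n^{\Omega(\log k)}$ for $\SUB{B_k}$ without any new ideas\,---\,and it is insufficient for Theorem \ref{thm:improved}. The entire content of Theorem \ref{thm:Bk} is the exponential improvement of the exponent from $\log k$ to $k^{\eps}$, and invoking ``an entropic or supersaturation argument tuned to the branching of $B_k$'' names that difficulty rather than resolving it. A smaller bookkeeping issue: the direction of your join inequality and the claimed $s \cdot n^{O(d)}$ bound need to be arranged so that the measure of the full treeset \emph{lower}-bounds formula size; as stated, the inequalities do not obviously chain to the stated conclusion.
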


Together Theorems \ref{thm:treedepth} and \ref{thm:Bk} imply:

\begin{thm}\label{thm:td-bound}
For all graphs $G$, the $\ACzero$ formula complexity of $\SUB{G}$ is $n^{\Omega(\td(G)^\eps)}$ for an absolute constant $\eps > 0$.
\end{thm}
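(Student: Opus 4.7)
The plan is a case analysis driven by Theorem \ref{thm:treedepth}, with the heavy lifting delegated to Theorems \ref{thm:LRR}, \ref{thm:Pk}, and \ref{thm:Bk}. Fix a graph $G$ with $\td(G) = k$, and let $\ell = \wt\Omega(k^{1/5})$ be the parameter supplied by Theorem \ref{thm:treedepth}, so that $G$ must fall into at least one of three structural cases. In each case I would produce an $\ACzero$ formula size lower bound of the form $n^{\Omega(\ell^{\eps'})}$ for some absolute $\eps' > 0$, and then finally convert back to $\td(G)$.

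In case (i), where $\tw(G) \ge \ell$, I would invoke Theorem \ref{thm:LRR} directly to get an $\ACzero$ circuit size lower bound $n^{\Omega(\ell/\log\ell)}$ on $\SUB{G}$; since every formula is a circuit, the same bound applies to $\ACzero$ formula size. In case (ii), $G$ contains $P_{2^\ell}$ as a subgraph and therefore as a minor, so Proposition \ref{prop:sub-sub} gives $\SUB{P_{2^\ell}} \lemp \SUB{G}$, and Theorem \ref{thm:Pk} yields an $\ACzero$ formula size lower bound $n^{\Omega(\log 2^\ell)} = n^{\Omega(\ell)}$ on $\SUB{G}$ (using that $\lemp$ is monotone under every standard complexity measure, including depth-$d$ formula size). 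In case (iii), $B_\ell \preceq G$, so Proposition \ref{prop:sub-sub} gives $\SUB{B_\ell} \lemp \SUB{G}$, and Theorem \ref{thm:Bk} yields an $\ACzero$ formula size lower bound $n^{\Omega(\ell^\eps)}$ on $\SUB{G}$.

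To combine the cases, I would take $\eps' > 0$ to be a constant no larger than $\eps$ or the implicit exponent in case (i), so that each of the three bounds is at least $n^{\Omega(\ell^{\eps'})}$. Substituting $\ell = \wt\Omega(k^{1/5})$ and absorbing the polylogarithmic slack in $\wt\Omega$ into a slightly smaller constant exponent then gives the desired bound $n^{\Omega(\td(G)^\eps)}$ for an absolute constant $\eps > 0$.

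The main obstacle is really external: all the difficulty sits inside Theorem \ref{thm:treedepth} (the polynomial excluded-minor approximation of tree-depth from \cite{RossmanTREEDEPTH}) and Theorem \ref{thm:Bk} (the $\ACzero$ formula size lower bound for complete binary trees from \cite{RossmanSUBGRAPH}); given these, and given the minor-monotonicity of $\SUB{(\cdot)}$ under $\lemp$, the proof is a routine three-case merger. The only subtlety worth double-checking is the bookkeeping of exponents when passing from $\ell$ back to $k$ through the $\wt\Omega$ notation, but this is absorbed into the final constant $\eps$.
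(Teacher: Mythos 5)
Your proposal is correct and follows essentially the same route as the paper: Section \ref{sec:extension} derives Theorem \ref{thm:td-bound} by exactly this three-way case split via Theorem \ref{thm:treedepth}, invoking Theorem \ref{thm:LRR} in case (i), Theorem \ref{thm:Pk} together with minor-monotonicity (Proposition \ref{prop:sub-sub}) in case (ii), and Theorem \ref{thm:Bk} in case (iii), then absorbing the $\wt\Omega(k^{1/5})$ loss into the final exponent. Your write-up is, if anything, slightly more explicit than the paper about the $\lemp$ reductions and the exponent bookkeeping.
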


Plugging Theorem \ref{thm:td-bound} into the argument in the previous subsection directly yields the polynomial upper bound $\beta(k) \le k^{O(1)}$ of Theorem \ref{thm:finiteHPT}. (In fact, we get $\beta(k) \le k^{1/\eps}$ for the constant $\eps > 0$ of Theorem \ref{thm:td-bound}.)

\section{Comparison with the Method in (R.\ 2008)}\label{sec:comparison}
 
In this section, for the sake of comparison, we summarize the model-theoretic approach of the original proof of Theorem \ref{thm:finiteHPT} in \cite{rossman2008homomorphism}.  The starting point in \cite{rossman2008homomorphism} is a new compactness-free proof of the classical Homomorphism Preservation Theorem, which moreover yields the stronger ``equi-rank'' version (Theorem \ref{thm:equirank}).  The proof is based on an operation mapping each structure $\mc A$ to an infinite co-retract $\Gamma(\mc A)$. (We drop the assumption of the last two sections that structures are finite by default.) In order to state the key property of this operation, we introduce notation $\mf A \equiv_{\mr{FO}(k)} \mf B$ (resp.\ $\mf A \equiv_{\exists^+\mr{FO}(k)} \mf B$) denoting the statement that $\mf A$ and $\mf B$ satisfy the same first-order sentences (resp.\ existential-positive sentences) of quantifier-rank $k$.

\begin{thm}\label{thm:Gamma}
There is an operation $\Gamma : \{$structures$\} \to \{$structures$\}$ associating every structure $\mf A$ with a co-retract $\Gamma(\mf A) \retr \mf A$ such that, for all structures $\mf A$ and $\mf B$ and $k \in \N$,
\[
  \mf A \equiv_{\exists^+\FO(k)} \mf B \ \Longrightarrow\ \Gamma(\mf A) \equiv_{\FO(k)} \Gamma(\mf B).
\] 
\end{thm}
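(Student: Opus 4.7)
My approach is to build $\Gamma(\mf A)$ as a countably infinite, ``$\exists^+$-saturated'' extension of $\mf A$ and then establish the implication by an Ehrenfeucht--Fra\"iss\'e argument on $(\Gamma(\mf A),\Gamma(\mf B))$. Concretely, I would enumerate all pairs $(\mf C,h)$ with $\mf C$ a finite structure and $h:\mf C\to\mf A$ a homomorphism, and form $\Gamma(\mf A)$ by iteratively attaching, for each such $(\mf C,h)$ and each $i\in\N$, a fresh disjoint copy of $\mf C$ glued to $\mf A$ along $h$. The resulting structure contains $\mf A$ as an induced substructure, and sending every fresh vertex back to its $h$-image defines a retraction $\Gamma(\mf A)\to\mf A$ fixing $\mf A$ pointwise, so $\Gamma(\mf A)\retr\mf A$ by construction. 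The key structural property is \emph{$\exists^+$-saturation}: every finite primitive-positive type realized somewhere in $\Gamma(\mf A)$ is realized in infinitely many pairwise-disjoint copies, and fresh realizations can be exhibited extending any finite partial match.

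\textbf{From $\exists^+$-equivalence to FO-equivalence.} Assuming $\mf A\equiv_{\exists^+\FO(k)}\mf B$, I would describe Duplicator's winning strategy in the $k$-round EF game on $(\Gamma(\mf A),\Gamma(\mf B))$. After round $j$ has produced tuples $\vec a\in\Gamma(\mf A)^j$ and $\vec b\in\Gamma(\mf B)^j$, Duplicator maintains the invariant that $\vec a$ and $\vec b$ realize the same primitive-positive type of quantifier-rank $k-j$. To respond to a Spoiler move (say in $\Gamma(\mf A)$), Duplicator extracts the canonical pp-formula $\psi(\vec x,y)$ of quantifier-rank $k-j-1$ describing the augmented tuple; by the inductive invariant together with $\exists^+$-saturation of $\Gamma(\mf B)$, a matching element is always available. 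The base of the induction uses $\mf A\equiv_{\exists^+\FO(k)}\mf B$ and the fact that $\Gamma$ preserves $\exists^+$-theory (being a co-retract of $\mf A$, it satisfies the same positive-existential sentences).

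\textbf{Main obstacle.} The delicate point is converting pp-type agreement into \emph{full} FO-type agreement, which fails for arbitrary structures but should succeed on $\Gamma$-images because negative facts there reduce to the \emph{absence} of pp-witnesses, and the infinite multiplicity of pp-copies means such absences are controlled by the pp-theory of the underlying $\mf A$. I expect the technical heart to be a careful symmetric induction on quantifier-rank, showing that in $\Gamma$-structures equality of pp-types at rank $k$ upgrades to equality of full FO-types at rank $k$. This requires the saturation property to be deployed in both directions of the back-and-forth, and requires verifying that the amalgamation does not inadvertently introduce spurious negative facts (for instance, identifications across distinct copies, or non-intended connections) that could be detected by FO but not by $\exists^+$ logic; the cleanest way to rule these out is to make the attached copies rigidly disjoint outside the prescribed gluing image and to insist on a strictly induced-substructure inclusion $\mf A\hookrightarrow\Gamma(\mf A)$.
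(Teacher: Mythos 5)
Your overall strategy---pass to a saturated infinite co-retract and run a back-and-forth maintaining equality of primitive-positive types---is indeed the strategy of the cited proof, but two things stand in the way. First, the construction you propose is not the one that works. The structure $\Gamma(\mf A)$ of \cite{rossman2008homomorphism} (described in Section \ref{sec:comparison}) is the Fra\"{\i}ss\'{e} limit of the class of \emph{co-finite co-retracts} of $\mf A$: extensions $\mf A'$ with finitely many new elements in which the new elements may participate in relations \emph{with elements of $A$ and with one another}, subject only to the existence of a retraction $\mf A' \to \mf A$. Attaching pairwise-disjoint copies of finite structures $\mf C$ along homomorphisms $h\colon \mf C\to\mf A$ realizes only the ``free'' co-retracts $\mf A\sqcup\mf C$; it never produces a new element related to an element of $A$, and it performs no amalgamation of distinct extensions over a common tuple. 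Consequently every element of your $\Gamma(\mf A)$ lies in a finite component (some finite structure admitting a homomorphism to $\mf A$), and once Spoiler commits to a component, every later move that must be \emph{related} to earlier moves has to be answered inside a single finite component of $\Gamma(\mf B)$. The saturation property you invoke (``fresh realizations can be exhibited extending any finite partial match'') fails there: realizing a prescribed pp-type over an already-chosen tuple requires new elements adjacent to the chosen ones, which disjoint copies do not supply. The homogeneity of the Fra\"{\i}ss\'{e} limit---obtained by amalgamating co-retracts, not juxtaposing them---is exactly what guarantees that every finite tuple extends by witnesses of every compatible pp-type.

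Second, even granting a suitable construction, your argument defers precisely the statement that has to be proved. Maintaining ``same pp-type of rank $k-j$'' does keep the position a partial isomorphism (atomic formulas are pp, so equal pp-types force equal atomic types, negations included), but the inductive step requires Duplicator to find a witness realizing the prescribed pp-type \emph{exactly}---satisfying no additional pp-formulas over the current tuple---since any surplus positive fact on one side becomes a discrepancy Spoiler exploits in later rounds. Your ``Main obstacle'' paragraph names this (upgrading pp-type agreement to full $\FO$-type agreement) and asserts it ``should succeed on $\Gamma$-images,'' but supplies no argument; this upgrade, carried out through the orbit structure of the homogeneous limit, is the entire content of the theorem, so as written the proposal is a plan whose decisive lemma is assumed. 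For calibration: the present paper does not reprove this result either---it quotes it from \cite{rossman2008homomorphism} and records only the one-line description of $\Gamma(\mf A)$ as the Fra\"{\i}ss\'{e} limit of the co-finite co-retracts of $\mf A$, so the comparison here is necessarily against that construction rather than against a written-out proof.
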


There is a straightforward proof that Theorem \ref{thm:Gamma} implies Theorem \ref{thm:equirank} (see \cite{rossman2008homomorphism}).  The structure $\Gamma(\mf A)$ is the Fra\"{\i}sse limit of the class of co-finite co-retracts of $\mf A$ (that is, structures $\mf A'$ such that $\mf A' \retr \mf A$ and $A' \setminus A$ is finite).  We remark that $\Gamma(\mf A)$ is infinite, even when $\mf A$ is finite.  For this reason, Theorem \ref{thm:Gamma} says nothing in the setting of finite structures.

The Homomorphism Preservation Theorem on Finite Structures (Theorem \ref{thm:finiteHPT}) is proved in \cite{rossman2008homomorphism} by considering a sequence of finitary ``approximations'' of $\Gamma(\mf A)$. (This is somewhat analogous to sense in which large random graph $G(n,1/2)$ ``approximate'' the infinite Rado graph.)

\begin{thm}\label{thm:Gamma-r}
There is a computable function $\beta : \N \to \N$ and a sequence $\{\Gamma_k\}_{k \in \N}$ of operations $\Gamma_k : \{$finite structures$\} \to \{$finite structures$\}$ associating every finite structure $\mf A$ with a sequence $\{\Gamma_k(\mf A)\}_{k \in \N}$ of finite co-retracts $\Gamma_k(\mf A) \stackrel{\scriptscriptstyle\supseteq}{\to} \mf A$ such that, for all finite structures $\mf A$ and $\mf B$ and $k \in \N$,
\[
  \mf A \equiv_{\exists^+\FO(\beta(k))} \mf B \ \Longrightarrow\ \Gamma_k(\mf A) \equiv_{\FO(k)} \Gamma_k(\mf B).
\]
\end{thm}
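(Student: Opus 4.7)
The plan is to construct $\Gamma_k(\mf A)$ as a finitary, depth-$k$ truncation of the Fra\"{\i}ss\'e limit $\Gamma(\mf A)$ of Theorem \ref{thm:Gamma}. Whereas $\Gamma(\mf A)$ is saturated for amalgamation against all finite co-retracts of $\mf A$, for a $k$-round Ehrenfeucht-Fra\"{\i}ss\'e game one only needs saturation to ``depth $k$'', which can be arranged inside a finite structure by iterated amalgamation. The translation tool is the standard correspondence (an unparametrized cousin of Lemma \ref{la:MinStruct}(\ref{la:qr-td})) between primitive-positive sentences and their canonical models: $\mf A \equiv_{\exists^+\FO(m)} \mf B$ if and only if, for every finite structure $\mf C$ with $\td(\Gaif(\mf C)) \le m$, one has $\mf C \to \mf A \iff \mf C \to \mf B$.

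Concretely, I would fix a rapidly decreasing sequence of tree-depth budgets $\beta(k) = r_0 > r_1 > \dots > r_k = 0$ and inductively build $\mf A_0 \subseteq \mf A_1 \subseteq \dots \subseteq \mf A_k =: \Gamma_k(\mf A)$, starting from $\mf A_0 := \mf A$. At each stage $i \to i+1$, for every pair $(\mf C, \mf D)$ with $\mf C \subseteq \mf D$ and $\td(\Gaif(\mf D)) \le r_{i+1}$, and for every embedding $\mf C \hookrightarrow \mf A_i$ that extends to some homomorphism $\mf D \to \mf A$, I would glue a fresh isomorphic copy of $\mf D$ onto $\mf A_i$ along $\mf C$. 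Finiteness of $\mf A_{i+1}$ follows from the finite number of cores of tree-depth at most $r_{i+1}$ (Lemma \ref{la:cores}(\ref{la:finitely-many})) together with the finiteness of $\mf A_i$, and the retraction $\Gamma_k(\mf A) \to \mf A$ is assembled stage-by-stage from the canonical homomorphisms $\mf D \to \mf A$ witnessing each gluing.

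To establish the implication, I would exhibit a winning duplicator strategy for the $k$-round EF-game on $(\Gamma_k(\mf A), \Gamma_k(\mf B))$, assuming $\mf A \equiv_{\exists^+\FO(\beta(k))} \mf B$. Define back-and-forth families $\Sigma_i \subseteq \Gamma_k(\mf A)^i \times \Gamma_k(\mf B)^i$, where $(\vec u, \vec v) \in \Sigma_i$ certifies that the Gaifman neighborhoods of tree-depth $r_i$ around $\vec u$ in $\mf A_i$ and around $\vec v$ in $\mf B_i$ carry matching $\exists^+\FO(r_i)$-types. The base case $(\emp, \emp) \in \Sigma_0$ is exactly the hypothesis. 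For the inductive step, a spoiler move in $\Gamma_k(\mf A)$ at round $j+1$ lies inside some gadget that was glued at stage $i$ for a pair $(\mf C, \mf D)$; since $\td(\Gaif(\mf D)) \le r_{i+1} \le r_i$, the $\exists^+\FO(r_i)$-hypothesis guarantees $\mf D \to \mf B$ as well, so the same gadget was glued into $\mf B_i$ and supplies the duplicator's reply.

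The main obstacle, and what forces $\beta$ to grow non-elementarily in $k$, is the recursive determination of the budgets $r_i$: propagating $\FO(k-i)$-equivalence through one more EF round requires certifying a match between tree-depth-$r_{i+1}$ gadgets using $\exists^+\FO$-equivalence at the strictly larger rank $r_i$, because the $\exists^+\FO$-type of a glued gadget is only ``visible'' through witnesses whose own tree-depth dominates the gadget's plus the running back-and-forth data. The number of cores of tree-depth $r_i$ is iterated-exponential in $r_i$, which in turn must exceed this count at the previous stage; iterating $k$ times yields a tower of height $\Theta(k)$. This is precisely the inefficiency that the circuit-complexity proof in Section \ref{sec:improved} sidesteps by replacing the iterated amalgamation with a one-shot monotone-projection reduction to $\SUB{G,n}$.
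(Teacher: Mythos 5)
You should first note that the paper does not actually prove Theorem~\ref{thm:Gamma-r}: Section~\ref{sec:comparison} only \emph{summarizes} the argument of \cite{rossman2008homomorphism}, describing it as ``considering a sequence of finitary approximations of $\Gamma(\mf A)$.'' Your overall strategy\,---\,iterated amalgamation of bounded-tree-depth extension pairs onto $\mf A$ with a decreasing schedule of budgets, the translation between $\equiv_{\exists^+\FO(m)}$ and the homomorphism test against structures of tree-depth $\le m$, and the diagnosis of the non-elementary blow-up as coming from the recursion $r_i \gg r_{i+1}$\,---\,is exactly the approach the paper attributes to that source, so at the level of architecture your proposal is on target.

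As a proof, however, the central step is asserted rather than established. The invariant $\Sigma_i$ is not well-defined: ``Gaifman neighborhoods of tree-depth $r_i$'' conflates a radius with tree-depth, and the index $i$ is used simultaneously for construction stages and game rounds. More importantly, the inductive step of the duplicator strategy is where the entire difficulty of the theorem lives, and your argument for it\,---\,``$\mf D \to \mf B$, so the same gadget was glued into $\mf B_i$ and supplies the reply''\,---\,does not work as stated. The hypothesis $\mf A \equiv_{\exists^+\FO(r_i)} \mf B$ guarantees that \emph{some} copy of the gadget $(\mf C,\mf D)$ was attached somewhere in $\Gamma_k(\mf B)$, but the duplicator must answer with an element whose atomic relations to \emph{all previously placed pebbles} agree with the spoiler's move; those pebbles may lie in $\mf B$ itself or inside other gadgets glued at earlier stages, and nothing in your construction guarantees a copy of $\mf D$ attached at a base $\mf C$ compatible with the current partial isomorphism. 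Making this work requires the back-and-forth condition to quantify over the joint configuration of the pebbles together with the candidate gadget (which is precisely what forces $r_i$ to dominate a function of $r_{i+1}$ \emph{and} the number of cores of tree-depth $r_{i+1}$, and hence drives the tower). Two smaller issues: the spoiler may play in $\mf A_0 = \mf A$ rather than inside any glued gadget, a case your strategy does not address; and $r_k = 0$ leaves no budget for matching atomic types in the final round.
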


Theorem \ref{thm:finiteHPT} follows directly from Theorem \ref{thm:Gamma-r}, inheriting the same quantifier-rank blow-up $\beta(k)$. The proof of Theorem \ref{thm:Gamma-r} in \cite{rossman2008homomorphism} implies a non-elementary upper bound on $\beta(k)$. While the present paper improves the upper bound $\beta(k) \le k^{O(1)}$ in Theorem \ref{thm:finiteHPT}, we remark that this it does not imply any improvement to $\beta(k)$ in Theorem \ref{thm:Gamma-r}.

\section{Syntax vs. Semantics in Circuit Complexity}\label{sec:conclusion}

We conclude this paper by stating some consequences of our results in circuit complexity.  Let $\mr{HomPreserved}$ denote the class of all homomorphism-preserved graph properties (for example, $\{G : \mr{girth}(G) \le 20$ or $\mr{clique\text{-}number}(G) \ge 10\}$).  This is a semantic class, akin to the class $\mr{Monotone}$ of all monotone languages.  The new proof in this paper of the Homomorphism Preservation Theorem on Finite Structures using $\ACzero$ lower bounds is easily to imply the following ``Homomorphism Preservation Theorem for (non-uniform) $\ACzero$'':
\[
  \mr{HomPreserved} \cap \ACzero 
  = \exists^+\mr{FO}\quad
  (\subseteq \{\text{poly-size monotone DNFs}\}).
\]
In other words, every homomorphism-preserved graph property in $\ACzero$ is definable (among finite graphs) by an existential-positive first-order sentence and, therefore, also by a polynomial-size monotone DNF (moreover, with constant bottom fan-in).  As a consequence, for every integer $d \ge 2$, we get a collapse of the $\ACzero$ depth hierarchy with respect to homomorphism-preserved properties:
\[
  \mr{HomPreserved} \cap \ACzero[\tu{depth }d] = \mr{HomPreserved} \cap \ACzero[\tu{depth }d+1].
\]
In contrast, it is known that $\ACzero[\tu{depth }d] \ne \ACzero[\tu{depth }d+1]$ by the Depth Hierarchy Theorem \cite{Hastad86}.

These results have an opposite nature to the ``syntactic monotonicity $\ne$ semantic monotonicity'' counterexamples of Ajtai and Gurevich \cite{AG87} and Razborov \cite{razborov1985lower} (as well as Tardos \cite{tardos1988gap}), which respectively show that
\[
  \mr{Monotone} \cap \mr{AC^0} \ne \mr{Monotone\text{-}AC^0}
  \quad\text{ and }\quad
  \mr{Monotone} \cap \mr{P} \ne \mr{Monotone\text{-}P}.
\]
In light of the results of this paper, I feel that questions of syntax vs.\ semantics in circuit complexity are worth re-examining. For instance, so far as I know, there is no known separation between the {\em uniform average-case} monotone vs.\ non-monotone complexity of any monotone function in any well-studied class of Boolean circuits ($\ACzero$, $\NCone$, etc.) It is plausible that syntactic monotonicity $=$ semantic monotonicity in the average-case. Evidence for this viewpoint comes from the considering the {\em slice distribution} (that is, the uniform distribution on inputs of Hamming weight exactly $\lfloor n/2 \rfloor$). With respect to the slice distribution, it is known that monotone and non-monotone complexity are equivalent within a $\mr{poly}(n)$ factor by a classic result of Berkowitz \cite{berkowitz1982some}.

As for an even stronger ``Homomorphism Preservation Theorem'' in circuit complexity,  we can state the following: if for every $k$, $\SUB{P_k}$ requires unbounded-depth formula size $n^{\Omega(\log k)}$ (which is widely conjectured to be true) or even $n^{\omega_{k\to\infty}(1)}$, then $\mr{HomPreserved} \cap \NCone = \exists^+\mr{FO}$. Therefore, I strongly believe in a ``Homomorphism Preservation Theorem for $\NCone$''. On the other hand, the homomorphism-preserved property of being 2-colorable a.k.a.\ non-bipartite ($=\{G : C_k \to G \text{ for any odd }k\}$) is in $\mr{Logspace}$ (this follows from Reingold's theorem \cite{reingold2008undirected}), yet it is not $\exists^+\mr{FO}$-definable. Therefore, we may assert that $\mr{HomPreserved} \cap \mr{Logspace} \ne \exists^+\mr{FO}$.

\end{document}